\newcommand{\bra}[1]{{\langle#1|}}
\newcommand{\ket}[1]{{|#1\rangle}}
\theoremstyle{plain}
\newtheorem{thm}{Theorem}
\newtheorem{lem}[thm]{Lemma}
\newtheorem{cor}[thm]{Corollary}
\theoremstyle{definition}
\newtheorem{defn}{Definition}
\preprint{MIT-CTP-4751}
\title{\Large Performance of QAOA on Typical Instances of \\\vspace{10pt} Constraint Satisfaction Problems with Bounded Degree}
\author{Cedric Yen-Yu Lin$^{a,b}$ and Yechao Zhu$ ^{b} $ \emails{cedricl@umiacs.umd.edu, eltonzhu@mit.edu}}
\affiliation{$ ^a $Joint Center For Quantum Information and Computer Science,\\\vspace{-2pt}University of Maryland, College Park, Maryland 20742, USA\\
$ ^b $Center For Theoretical Physics and Department of Physics, \\\vspace{-2pt}Massachusetts Institute of Technology, Cambridge, Massachusetts 02139, USA}
\abstract{We consider constraint satisfaction problems of bounded degree, with a good notion of "typicality", e.g. the negation of the variables in each constraint is taken independently at random. Using the quantum approximate optimization algorithm (QAOA), we show that $ \mu+\Omega(1/\sqrt{D}) $ fraction of the constraints can be satisfied for typical instances, with the assignment efficiently produced by QAOA. We do so by showing that the averaged fraction of constraints being satisfied is $ \mu+\Omega(1/\sqrt{D}) $, with small variance. Here $ \mu $ is the fraction that would be satisfied by a uniformly random assignment, and $ D $ is the number of constraints that each variable can appear. CSPs with typicality include Max-$ k $XOR and Max-$ k $SAT. We point out how it can be applied to determine the typical ground-state energy of some local Hamiltonians. We also give a similar result for instances with "no overlapping constraints", using the quantum algorithm. We sketch how the classical algorithm might achieve some partial result.}
\begin{document}
\maketitle
\section{Introduction}
A constraint satisfaction problem seeks to find an assignment that satisfies a maximum number of constraints, which are predicates over some variables.

From the algorithmic perspective, one is interested in finding an assignment that either satisfies all the constraints, or satisfies as many constraints as possible. For the latter problem, there is an important constant $ \mu $, which is the expected fraction of constraints satisfied by a random assignment. It can be shown, using the method of conditional expectation, that one can always find an assignment satisfying $ \mu $ fraction of the constraints. Then the question becomes whether one can find assignments which satisfy significantly more than that. In this aspect, H{\aa}stad has proved some remarkable inapproximability results\cite{Hastad01}. He showed that for Max-3SAT (where $ \mu=\frac{7}{8} $), given there exist assignments which satisfy all the constraints, it is NP-hard to find an assignment that satisfies $ \frac{7}{8}+\delta $ of them, for any $ \delta>0 $. There are similar results for Max-3XOR and other CSPs.

One can also consider CSPs with bounded degree, i.e each variable occurs in at most $ D $ constraints. Given such restrictions, H{\aa}stad showed these problems can be approximated within $ (\mu+\Omega(1/D))^{-1} $ \cite{Hastad00}. However, as shown by the case of Max-Cut on $ D $-cliques, this is the best possible for general CSPs.

Recently, Farhi, Goldstone, and Gutmann introduced the Quantum Approximate Optimization Algorithm (QAOA)\cite{Farhi1411}, and used it to give an efficient quantum algorithm which finds an assignment that satisfies $ \frac{1}{2}+\Omega\left(\frac{1}{\sqrt{D}\ln D}\right) $ fraction of the constraints, for Max-3XOR with bounded degree\cite{Farhi1412}. Later, Barak \textit{et al}. gave a classical algorithm which finds an assignment that satisfies $ \frac{1}{2}+\Omega(1/\sqrt{D}) $ fraction of the constraints, for Max-$ k $XOR with bounded degree, with $ k $ odd\cite{Barak15}. They further showed that this result is optimal. Despite these exciting breakthroughs, it is still unknown whether one can obtain similar results for Max-3SAT, and other CSPs in general.

Farhi \textit{et al}. also showed that, in the typical case, their quantum algorithm can output an assignment that satisfies $ \frac{1}{2}+\Omega(1/\sqrt{D}) $ fraction of the constraints, for Max-3XOR.

In this work, we formally define the notion of typicality for CSPs, and show that on average, one can efficiently output an assignment which satisfies $ \mu+\Omega(1/\sqrt{D}) $ fraction of the constraints, for such CSPs with bounded degree. CSPs with typicality include Max-$ k $XOR and Max-$ k $SAT. While our result is worse than Barak \textit{et al}.'s result for Max-$ k $XOR when $ k $ is odd, the other cases are rather interesting. This is achieved with a quantum algorithm (QAOA).

The paper is organized as follows. In the next section, we review the basic concepts in constraint satisfaction problems and QAOA, and formally define the notion of typicality. In section \ref{sec:MAX2XOR}, we use Max2XOR as an example to show how QAOA can give an advantage of $ \Omega(1/\sqrt{D}) $ for most instances. In section \ref{sec:typicality}, we show how QAOA can give an advantage of $ \Omega(1/\sqrt{D}) $ for most instances of CSPs with typicality, with small variance. In section \ref{sec:nooverlapping}, we show that QAOA can also give an advantage of $ \Omega(1/\sqrt{D}) $ for instances with "no overlapping constraints". In section \ref{sec:classical}, we sketch how the averaged advantage can be achieved with the classical algorithm developed by Barak \textit{et al.}, for some cases. Some details of the proofs are left in the appendices. 
\section{Preliminaries}\label{sec:preliminaries}
\begin{defn}[{\cite[Definition~7.22]{Odonnell14}}]
A constraint satisfaction problem (CSP) over domain $ \Omega $ is defined by a finite set of \textit{predicates} ("types of constraints") $ \Psi $, with each $ \psi\in\Psi $ being of the form $ \psi: \Omega^r~\to~\{0,1\} $ for some arity $ r $ (possibly different for different predicates).
\end{defn}

\begin{defn}[{\cite[Definition~7.24]{Odonnell14}}]
An instance $ \mathscr{P} $ of CSP($ \Psi $) over variable set $ V $ is a list of constraints. Each constraint $ C\in\mathscr{P} $ is a pair $ C=(\psi,\mathcal{S}) $ where $ \psi\in\Psi $ and where the scope $ \mathcal{S}=(x_1,\dots,x_r) $ is a tuple of distinct variables from $ V $.
\end{defn}

Given an assignment of the variables, one is either interested in the simultaneous satisfiability of all constraints, or the maximum number of constraints that can be satisfied. The latter problem is called Max-CSP. Typically we consider a CSP of $ m $ constraints and over $ n $ variables. Then the constraints are labeled as $ C_l=(\psi_l,\mathcal{S}_l) $ for $ l\in[m] $. Denote $ K_l $ the position of the tuples in $ \{x_1,\dots, x_n\} $. $ K_l\subset [n] $.

In this paper, we consider \textit{Boolean} CSPs of arity bounded by $ k $. By Boolean, we mean $ \left|\Omega\right|=2 $. For convenience, we choose $ \Omega=\{\pm1\} $. Since we've restricted the predicates to Boolean functions, the following theorem is particularly useful.

\begin{thm}[{\cite[Theorem~1.1]{Odonnell14}}]
Every function $ f:\{-1,1\}^k\to \mathbb{R} $ can be uniquely expressed as a multilinear polynomial,
\begin{equation}
f(x)=\sum_{K\subseteq[k]}\hat{f}(K)x^K, \text{where}~x^K=\prod_{i\in K}x_i
\end{equation}
\end{thm}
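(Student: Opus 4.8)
\textit{Proof proposal.} The plan is to establish existence and uniqueness separately, exploiting the fact that the space $\mathcal{F}$ of all functions $\{-1,1\}^k\to\mathbb{R}$ is a real vector space of dimension exactly $2^k$: such a function is determined by its $2^k$ values, and those values may be prescribed arbitrarily.

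For existence, I would start from the point indicators. For each $a\in\{-1,1\}^k$ set $\delta_a(x)=\prod_{i=1}^k \frac{1+a_i x_i}{2}$. Since each $x_i\in\{-1,1\}$, the factor $\frac{1+a_i x_i}{2}$ equals $1$ when $x_i=a_i$ and $0$ when $x_i=-a_i$, so $\delta_a(x)=\mathbf{1}[x=a]$. Expanding the product over $i$ yields a polynomial in $x_1,\dots,x_k$ that is multilinear (degree at most one in each variable), simply because the $k$ factors involve distinct variables, so no $x_i$ ever appears squared. Then for an arbitrary $f$ write $f(x)=\sum_{a\in\{-1,1\}^k} f(a)\,\delta_a(x)$, which is a sum of multilinear polynomials, hence itself multilinear of the stated form; collecting the coefficient of $x^K$ defines $\hat f(K)$.

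For uniqueness, observe that the $2^k$ monomials $\{x^K : K\subseteq[k]\}$ just shown to span $\mathcal{F}$ must, by the dimension count, form a basis, so the $\hat f(K)$ are uniquely determined. Alternatively — and this yields the useful formula $\hat f(K)=\mathbb{E}_x[f(x)\,x^K]$ for uniform $x$ — equip $\mathcal{F}$ with $\langle f,g\rangle=\mathbb{E}_{x\sim\{-1,1\}^k}[f(x)g(x)]$ and check that $\langle x^K,x^L\rangle=\prod_{i\in K\cap L}\mathbb{E}[x_i^2]\cdot\prod_{i\in K\triangle L}\mathbb{E}[x_i]=\mathbf{1}[K=L]$, using independence of the coordinates together with $\mathbb{E}[x_i]=0$ and $\mathbb{E}[x_i^2]=1$. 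Orthonormality gives linear independence, and $2^k$ linearly independent vectors in a $2^k$-dimensional space form a basis.

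There is no genuine obstacle here; the only point needing a moment's care is the bookkeeping in expanding $\prod_i (1+a_i x_i)/2$ and confirming the result is genuinely multilinear, which is immediate since the $k$ factors involve distinct variables.
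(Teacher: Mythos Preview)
Your argument is correct and is precisely the standard textbook proof (indicator interpolation for existence, then dimension count or orthonormality for uniqueness). The paper does not actually supply its own proof of this statement; it simply cites it as Theorem~1.1 of O'Donnell's \emph{Analysis of Boolean Functions}, whose proof is the one you have reproduced.
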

We assume that the scopes associated with the constraints are distinct when viewed as sets.
$ \mathbb{E}_x[\psi(x)]=\hat{\psi}(\emptyset) $ is the probability that a  constraint would be satisfied by a uniformly random assignment.

In some cases, once we fix the scope $ \mathcal{S}=\{x_1,\cdots,x_k\} $, there is additional freedom to choose what the constraint is, i.e. $ |\Psi|>1 $. For example, for Max-$ k $XOR, we can choose the constraint to be $ \psi(x_1,\cdots,x_k)=\frac{1}{2}\pm\frac{1}{2}\prod_{i=1}^{k}x_i $. In other cases, for example Max-Cut, we do not have such freedom. Once the scope is fixed, the associated constraint is also fixed. For general Max-CSP, we give a probability $ P(\psi_{i,l}) $ for each $ \psi_{i,l}\in\Psi_l $, with $ \sum_{i}P(\psi_{i,l})=1 $. Here $ l $ means it is the $ l $th constraint, and $ i $ indexes the possible predicates for the $ l $th constriant. All constraints are chosen independently according to some probability distributions defined above.
\begin{defn}\label{typicalitydef}
A constraint satisfaction problem is said to have \textit{typicality} if
\begin{equation}
\mathbb{E}_\psi[\psi(x)-\hat{\psi}(\emptyset)]\equiv\sum_iP(\psi_i)(\psi_i(x)-\hat{\psi}_i(\emptyset))=0, \quad\psi_i\in\Psi.
\end{equation}
\end{defn}
If we use the Fourier expansion of $ \psi(x) $, then Definition \ref{typicalitydef} is equivalent to 
\begin{equation}
\mathbb{E}_{\psi}[\hat{\psi}(K)]=0 ~~~\forall~K\neq\emptyset .
\end{equation}
In other words, the Fourier coefficients associated with a probability distribution of $ \psi $ have 0 mean.

Often, it is convenient to view $ \psi $ as a discrete random variable, with some underlying probability distribution. Therefore, we suppress the index $ i $ from now onwards, so as not to confuse with $ l $, which labels the constraints.

CSPs with typicality include $ k $XOR, in which every predicate is of the form $ \psi_{k\text{XOR}}(x_1,\dots,x_k)=\frac{1}{2}\pm\frac{1}{2}\prod_{i=1}^{k}x_i $ and $ k $SAT, in which every predicate is of the form $ \psi_{k\text{SAT}}(x_1,\dots,x_k)=1-\prod_{i=1}^{k}\frac{1\pm x_i}{2} $. An instance of $ k $XOR or $ k $SAT is constructed as follows: for each scope that we want to place a constraint, we choose the signs in each constraint independently at random. If the underlying constraint hypergraph has $ m $ constraints, we have $ 2^m $ possible $ k $XOR instances and $ 2^{km} $ possible $ k $SAT instances. For a general CSP, an instance is constructed by choosing the constraint on each scope according to some probability distribution, independently. This means that $ \hat{\psi_l}(S) $ from different constraints are independent random variables. This is similar to the semi-random model considered in \cite{Feige07}. Note that MaxCut does not have typicality, despite that a MaxCut instance is a 2XOR instance.

\begin{thm}\label{typicality}
Suppose a constraint satisfaction problem $ \Psi $ of bounded degree (Max-CSP$ (\Psi)_\text{B} $) has typicality. Given a set of scopes $\{\mathcal{S}_l\}$, construct a CSP instance $\{\psi_l,\mathcal{S}_l\}$ by choosing the constraints $\psi_l$ at random according to the probability distribution $P$. Then with probability at least $ 1 - O(D^3/m) $, there is a quantum algorithm which finds an assignment satisfying a $ \mu+\Omega(1/\sqrt{D}) $ fraction of the constraints.\footnote{Most statements on approximation algorithms are given in terms of approximation ratios: the number of contraints that the output assignment satisfies, divided by the number of constraints satisfied by the best assignment. Our statement, like that of previous work \cite{Farhi1412,Barak15}, is stronger; we show that for most instances, there is an assignment satisfying a $ \mu+\Omega(1/\sqrt{D}) $ fraction of the \emph{total} number of constraints.}
\end{thm}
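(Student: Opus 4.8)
The plan is to run the depth‑one QAOA, $\ket{\beta,\gamma}=e^{-i\beta B}e^{-i\gamma C}\ket{s}$ with $\ket{s}$ the uniform superposition, $B=\sum_j X_j$, and $C=\sum_l C_l$ the cost Hamiltonian, and to control the \emph{expectation over the random predicates} of $F_{\beta,\gamma}:=\langle C\rangle_{\beta,\gamma}=\sum_l\langle C_l\rangle_{\beta,\gamma}$ together with its variance. Since for any distribution over assignments the maximum is at least the average, once we know $F_{\beta,\gamma}\ge\mu m+\Omega(m/\sqrt D)$ for a fixed good pair of angles and that $F_{\beta,\gamma}$ concentrates, it follows that for all but an $O(D^3/m)$ fraction of instances the QAOA state supports an assignment meeting the bound, and such an assignment is produced by $O(\mathrm{poly})$ preparations and measurements. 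Everything thus reduces to a first‑moment lower bound and a variance bound on $F_{\beta,\gamma}$.

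For the first moment, write $C_l=\sum_{K\subseteq\mathcal S_l}\hat\psi_l(K)\prod_{i\in K}Z_i$. At depth one, $\langle C_l\rangle_{\beta,\gamma}=\bra{s}e^{i\gamma C}e^{i\beta B}C_le^{-i\beta B}e^{-i\gamma C}\ket{s}$ is a fixed function of $\beta,\gamma$ and the $O(kD)$ Fourier coefficients of constraints sharing a variable with $C_l$. Conjugating through $e^{-i\beta B}$ sends each $Z_i$ to $\cos2\beta\,Z_i-\sin2\beta\,Y_i$; conjugating through $e^{-i\gamma C}$, a Pauli string on the scope of a term $K$ becomes an $\{I,X\}$ string — the only kind surviving $\bra{s}\cdot\ket{s}$ — exactly by being "flipped" with a set of Fourier terms of neighbouring constraints whose scopes have symmetric difference equal to that of $K$, each flip contributing a factor $\sin(2\gamma\hat\psi_{l'}(K'))$ and each anticommuting‑but‑unused neighbouring term a factor $\cos(2\gamma\hat\psi_{l'}(K'))$. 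The lowest‑order surviving contribution for each $K\ne\emptyset$ flips the whole scope of $K$ using the term $\hat\psi_l(K)$ of $C_l$ itself, giving
\begin{equation}
\langle C_l\rangle_{\beta,\gamma}-\mu_l = \sum_{K\ne\emptyset}\hat\psi_l(K)\,\sin\!\big(2\gamma\hat\psi_l(K)\big)\,g_{|K|}(\beta)\prod_{\text{nbr}}\cos\!\big(2\gamma\hat\psi_{l'}(K')\big) + (\text{cross terms}) + O(\gamma^2),
\end{equation}
where $g_{|K|}$ is an explicit trigonometric function; one checks that the two angles can be chosen (e.g.\ $\beta$ small and $\gamma>0$) so that each $g_{|K|}(\beta)\ge0$, making the displayed leading sum nonnegative since $\hat\psi_l(K)\sin(2\gamma\hat\psi_l(K))\ge0$. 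The "cross terms" carry the coefficients of a \emph{neighbouring} constraint an odd number of times in total, so by typicality ($\mathbb E_\psi[\hat\psi_{l'}(K')]=0$) and independence of the $\hat\psi_{l'}$ across constraints they vanish in expectation; this is exactly where typicality is used, and why MaxCut (which lacks it) is excluded.

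Taking expectations and using that a Boolean predicate of bounded arity with $\mu_l$ bounded away from $0$ and $1$ has nontrivial Fourier weight $\sum_{K\ne\emptyset}\hat\psi_l(K)^2=\mu_l(1-\mu_l)=\Theta(1)$, we get $\mathbb E_\psi[\langle C_l\rangle_{\beta,\gamma}]\ge\mu_l+c\,\gamma\prod_{\text{nbr}}\mathbb E[\cos(2\gamma\hat\psi_{l'}(K'))]-O(\gamma^2\mathrm{poly}(D))$ with $c=\Theta(1)$; since each variable lies in at most $D$ constraints, the product of the $O(D)$ cosine‑expectations equals $\exp(-\Theta(\gamma^2 D))$, and this is the only place $D$ enters the main term. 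Choosing $\gamma=\Theta(1/\sqrt D)$ balances $c\gamma$ against $\exp(-\Theta(\gamma^2D))$ so that $\mathbb E_\psi[\langle C_l\rangle_{\beta,\gamma}]\ge\mu_l+\Omega(1/\sqrt D)$ while the $O(\gamma^2\mathrm{poly}(D))$ remainder is $o(1/\sqrt D)$; summing over $l$ gives $\mathbb E_\psi[F_{\beta,\gamma}]\ge\mu m+\Omega(m/\sqrt D)$. For the variance, $\langle C_l\rangle_{\beta,\gamma}$ depends only on the $\psi_{l'}$ within distance one of $l$ and $|\langle C_l\rangle_{\beta,\gamma}|\le\|C_l\|=O(1)$, so $\mathrm{Cov}(\langle C_l\rangle,\langle C_{l'}\rangle)=0$ unless $d(l,l')\le2$ — at most $O((kD)^2)$ indices $l'$ per $l$ — whence $\mathrm{Var}_\psi[F_{\beta,\gamma}]=O(mD^2)$; Chebyshev with deviation $\Theta(m/\sqrt D)$ then yields the failure probability $O(mD^2\cdot D/m^2)=O(D^3/m)$.

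The crux is the first‑moment analysis: extracting the single positive $\Theta(\gamma)$ piece, confirming that all genuine cross terms cancel under $\mathbb E_\psi$ (the role of typicality and cross‑constraint independence), and bounding the remainder uniformly in $D$ so that the optimal $\gamma\sim1/\sqrt D$ still leaves a net $\Omega(1/\sqrt D)$ advantage. One must be a little careful for predicates whose nontrivial Fourier weight lies only at even levels (e.g.\ even‑arity XOR): the first‑order term is still present, provided one does not prematurely discard the $\cos2\beta\,Z$ branches, which do not by themselves annihilate $\ket{s}$ once they have been conjugated through $e^{-i\gamma C}$. The variance step is routine given the locality structure.
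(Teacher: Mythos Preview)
Your high–level strategy is exactly the paper's: run depth-one QAOA, prove a first-moment lower bound $\mathbb E_\psi[F_{\beta,\gamma}]\ge(\mu+\Omega(1/\sqrt D))m$ using typicality to kill cross terms, bound $\mathrm{Var}_\psi[F_{\beta,\gamma}]=O(mD^2)$ by locality, and finish with Chebyshev. The variance and Chebyshev steps match the paper's Section~4.3 essentially verbatim.

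The substantive difference is in the first-moment computation. The paper does \emph{not} evolve with the full objective $C=\sum_l\psi_l(Z)$ as you do; it evolves with the \emph{truncated} Hamiltonian $C=\sum_l\hat\psi_l Z^{K_l}$ containing only the highest-degree Fourier term of each predicate, and then evaluates $\langle\sum_l\psi_l(Z)\rangle$ in that state. This buys two things. First, the self-flip analysis becomes clean: conjugating the $q$-odd piece of $\prod_{i\in K_u}(\cos 2\beta\,Z_i+\sin 2\beta\,Y_i)$ by $e^{i\gamma\hat\psi_u Z^{K_u}}$ produces a pure $X$-string, and then conjugating by $\bar C$ involves only one independent random variable per neighbouring constraint, so $\mathbb E_\psi[\bra{s}e^{i\gamma\bar C}X_je^{-i\gamma\bar C}\ket{s}]$ factorises exactly and the remainder is an explicit $O(1/D)$. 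Second, lower-degree terms $\hat\psi_u(K)Z^K$ with $K\subsetneq K_u$ are handled \emph{separately} (Section~4.2): the key point there is that after the self-flip one is left with $X_{t_1}\cdots X_{t_q}Z^{K_u\setminus K}$, which carries an unavoidable $Z$-tail and hence vanishes under $\bra s\cdot\ket s$ to leading order, so these terms contribute only $O(1/D)$.

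With the full Hamiltonian your leading ``self-flip'' term is still correct (one checks, as you do, that a single $\sin$-flip yields an $\{I,X\}$ string only when $K'=K$), but two places in your sketch are loose. First, the product $\prod_{\text{nbr}}\cos(2\gamma\hat\psi_{l'}(K'))$ now runs over \emph{all} Fourier modes of all overlapping constraints, including the other modes $\hat\psi_l(K')$ of constraint $l$ itself; these are correlated with $\hat\psi_l(K)$, so $\mathbb E_\psi$ does not factor over them, and you need a short argument that this only perturbs the estimate by $O(1/D)$. Second, your remainder ``$O(\gamma^2\,\mathrm{poly}(D))$ is $o(1/\sqrt D)$'' does not follow as written: with $\gamma=\Theta(1/\sqrt D)$ this is $\mathrm{poly}(D)/D$, and you must actually show the combinatorics force the polynomial to be $o(\sqrt D)$ (in fact $O(1)$). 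The paper's truncation is precisely the device that makes this remainder an explicit $O(1/D)$ without any combinatorial bookkeeping; if you keep the full Hamiltonian you owe that bookkeeping.
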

We note that this theorem holds for all choices of the interaction graph $\{\mathcal{S}_l\}$; the probability in the theorem statement is taken only over the choice of the $\psi_l$'s. We typically consider the case when $m$ is much greater than $D$; for instance, in a $D$-regular graph, $ m = O (nD) $ where $n$ is the number of vertices. In that case $ D^3 / m \approx 0 $ for large instances, so the quantum algorithm works well for nearly all choices of constraints. 

We will prove this theorem in the next few sections. Moreover, we'll prove it for predicates $ \psi_l(x)=\sum_{K\subseteq K_l}\hat{\psi}_l(K)x^K $ in which the highest degree term has $ K=K_l $, i.e. the highest degree term covers the entire scope. The proof for the most general case is sketched in the appendix.

We also define what it means for an instance to have "no overlapping constraints", and show that a similar result applies to such instances of any CSP, using the same quantum algorithm.
\begin{thm}\label{triangle}
Suppose the instance of a constraint satisfaction problem of bounded degree $ D $ has "no overlapping constraints", then there is a quantum algorithm which finds an assignment satisfying $ \mu+\Omega(1/\sqrt{D}) $ fraction of the constraints.
\end{thm}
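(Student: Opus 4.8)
The plan is to run QAOA at level $p=1$ and evaluate the expected number of satisfied constraints exactly, with the no-overlapping structure playing the role that the mean-zero independent Fourier coefficients play in the proof of Theorem~\ref{typicality}; the computation closely parallels the Max-2XOR example of Section~\ref{sec:MAX2XOR}. As there, I would first prove it for predicates whose highest-degree Fourier monomial is the full scope $K_l$, the general case being analogous.

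\emph{Setup and locality.} Prepare $\ket{s}=\ket{+}^{\otimes n}$, apply $U_C(\gamma)=e^{-i\gamma C}$ with $C=\sum_l C_l$ the diagonal cost operator, then $U_B(\beta)=e^{-i\beta\sum_j X_j}$, and measure; $\beta$ will be a constant (depending only on the maximum arity $k$) and $\gamma=\Theta(1/\sqrt D)$. The expected number of satisfied constraints is $F(\gamma,\beta)=\sum_l\bra{s}U_C^\dagger U_B^\dagger\,C_l\,U_B U_C\ket{s}$; since $U_B$ factorizes over qubits and $U_C$ over scopes, each summand depends only on $C_l$ and the $O(kD)$ constraints sharing a variable with it. Writing $C_l=\sum_{K\subseteq K_l}\hat\psi_l(K)Z^K$, the $K=\emptyset$ term contributes $\hat\psi_l(\emptyset)=\mu_l$ exactly, so $F(\gamma,\beta)=\mu m+\sum_l\sum_{\emptyset\neq K\subseteq K_l}\hat\psi_l(K)\,\langle Z^K\rangle_{\gamma,\beta}$, and everything reduces to these local expectations.

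\emph{Evaluating the local expectations.} Conjugating $Z^K$ through $U_B$ sends each $Z_i\mapsto\cos 2\beta\,Z_i+\sin 2\beta\,Y_i$; conjugating through $U_C$ leaves a $Z_i$ fixed but rotates each $Y_i$ into a sum of terms $X_i\,Z^{K'\setminus\{i\}}$, one for every constraint $C_{l'}\ni i$ and monomial $K'\ni i$ of $C_{l'}$, with amplitude $\sin(2\gamma\hat\psi_{l'}(K'))$ times $\cos$-type survival factors from the other constraints on $i$; the expectation in $\ket{s}$ annihilates any operator that is not a product of $X$'s and identities. This is where \emph{no overlapping constraints} is used: it forces the dominant contribution (from $K=K_l$) to come only from the \emph{self-closing} channel in which exactly one variable $i\in K_l$ picks up a $Y$ from $U_B$ and is closed by $C_l$'s own monomial $Z^{K_l}$, whose residual $Z^{K_l\setminus\{i\}}$ exactly cancels the $Z$'s left on $K_l\setminus\{i\}$; any other cancellation would require one scope to be contained in another, or a closed cycle (a ``triangle'') among the participating constraints, both excluded by the hypothesis. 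Summing over $i\in K_l$ gives $\langle Z^{K_l}\rangle_{\gamma,\beta}\approx k\,(\cos 2\beta)^{k-1}(\sin 2\beta)\,\sin(2\gamma\hat\psi_l(K_l))\,S_l$ with $S_l$ a product of $O(kD)$ survival factors each of the form $1-O(\gamma^2)$, hence $S_l=\Omega(1)$ once $\gamma=\Theta(1/\sqrt D)$. Thus the $K_l$-term's contribution to $\langle C_l\rangle$ is $\hat\psi_l(K_l)\langle Z^{K_l}\rangle\approx k(\cos2\beta)^{k-1}(\sin2\beta)\,\hat\psi_l(K_l)\sin(2\gamma\hat\psi_l(K_l))\,S_l$; since $x\sin(2\gamma x)\ge 0$ whenever $2|\gamma x|\le\pi$ (which holds here, Fourier coefficients being bounded and $|\gamma|$ small), this is nonnegative for \emph{every} constraint irrespective of the sign of $\hat\psi_l(K_l)$, and is $\Theta(\gamma)=\Theta(1/\sqrt D)$ when $\hat\psi_l(K_l)$ is bounded away from $0$ (as for $k$XOR and $k$SAT). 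All remaining contributions --- lower monomials $K\subsetneq K_l$, two or more $Y$'s, higher powers of $\gamma$ --- are $O(\gamma^2)=O(1/D)$ per constraint and sum to $O(m/D)=o(m/\sqrt D)$. Hence $F(\gamma,\beta)\ge\mu m+\Omega(m/\sqrt D)$ for a suitable constant $\beta$ and $\gamma=\Theta(1/\sqrt D)$.

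\emph{From expectation to assignment.} Since $0\le C(z)\le m$ for every $z$ and $\mathbb E_z[C(z)]=F(\gamma,\beta)\ge\mu m+cm/\sqrt D$, a Markov-type estimate gives $\Pr_z[C(z)\ge\mu m+\tfrac c2 m/\sqrt D]=\Omega(1/\sqrt D)$, so running the (efficiently implementable, $p=1$) circuit $O(\sqrt D\log(1/\epsilon))$ times and keeping the best measured assignment succeeds with probability $1-\epsilon$. I expect the main obstacle to be the evaluation of the local expectations: enumerating exactly which operator strings survive the $\ket{s}$-expectation for a general bounded-arity predicate (rather than a single Fourier monomial as in $k$XOR) and checking that the no-overlapping hypothesis genuinely suppresses every ``cycle''-type term to $O(1/D)$ per constraint; most of the needed apparatus can, however, be imported from Section~\ref{sec:MAX2XOR} and the proof of Theorem~\ref{typicality}.
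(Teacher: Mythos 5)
Your overall strategy is the paper's: run $p=1$ QAOA with $\gamma=\Theta(1/\sqrt D)$ and a constant $\beta$ depending on $k$, isolate the single-$Y$ ``self-closing'' channel whose contribution is $\hat\psi_u\sin(2\gamma\hat\psi_u)\times(\text{survival factors})\ge 0$ for \emph{every} sign of $\hat\psi_u$, and use the no-overlapping hypothesis to reduce the survival factors to products of $O(kD)$ cosines, each $1-O(\gamma^2)$, hence $\Omega(1)$. That is exactly the mechanism behind (\ref{kxortriangle})--(\ref{oneC}), and it is the right reason the result is deterministic rather than average-case. Two points in your error accounting are wrong as stated, however. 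First, your claim that channels with ``two or more $Y$'s'' contribute $O(\gamma^2)$ per constraint fails for \emph{odd} $q\ge 3$: expanding $\prod_{i\in K_u}(\cos2\beta\,Z_i+\sin2\beta\,Y_i)$, the $q$-$Y$ term with $q$ odd is closed by constraint $u$'s own top monomial into $-i^{q+1}\sin(2\gamma\hat\psi_u)X_{t_1}\cdots X_{t_q}$, a single factor of $\sin(2\gamma\hat\psi_u)=\Theta(\gamma)$ --- the same order as the $q=1$ term --- and the sign $-i^{q+1}$ alternates, so the $q=3$ contribution is \emph{negative}. These terms are not negligible; they must be beaten by choosing $\beta$ small enough that $k_u\cos(2\beta)^{k_u-1}\sin(2\beta)$ times the survival factor exceeds $\sum_{q\ge3\ \mathrm{odd}}\binom{k_u}{q}\bigl|\cos(2\beta)^{k_u-q}\sin(2\beta)^q\bigr|$, which is precisely the bound in (\ref{kxortriangle}). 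Your ``$\beta$ depends on $k$'' gestures at this but the $O(\gamma^2)$ claim would not survive scrutiny for $k\ge 3$.

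Second, you evolve with the \emph{full} cost Hamiltonian, whereas the paper evolves with the truncated Hamiltonian $C=\sum_l\hat\psi_lZ^{K_l}$ while still measuring the full objective. This is not merely cosmetic: with the full Hamiltonian, a measured lower monomial $\hat\psi_u(K)Z^K$, $K\subsetneq K_u$, has its own self-closing channel through the monomial $K$ of $e^{-i\gamma C}$, contributing $\hat\psi_u(K)\sin(2\gamma\hat\psi_u(K))\cdot(\cdots)=\Theta(\gamma)$, not $O(\gamma^2)$ --- harmless because nonnegative at $q=1$, but again carrying alternating-sign $q\ge3$ pieces at order $\gamma$. Truncation kills every single-$\sin$ channel for the lower measured monomials (the residual $Z^{K_u\setminus K}$ never cancels under no-overlapping), which is what makes the ``lower degree terms do not contribute'' claim clean at $O(1/D)$. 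Either fix your bookkeeping for the full Hamiltonian or switch to the truncated one. Finally, to get $\Omega(1/\sqrt D)$ for a \emph{general} predicate you should note that every nonzero Fourier coefficient of a predicate $\psi:\{-1,1\}^k\to\{0,1\}$ is a multiple of $2^{-k}$, so $|\hat\psi_u|\ge 2^{-k}=\Omega(1)$; your hedge ``as for $k$XOR and $k$SAT'' is unnecessary. Your closing amplification step (Markov plus $O(\sqrt D)$ repetitions) is correct and is a detail the paper leaves implicit.
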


Throughout the paper, we use the Quantum Approximate Optimization Algorithm (QAOA), as introduced in \cite{Farhi1411}. The input is the instance of the CSP constraint, as specified by an objective function $ C(z)=\sum_l\psi_l(z) $. The instance has $ m $ constraints over $ n $ variables. A quantum state $ \ket{\gamma,\beta} $ is constructed as follows.

First, we take the initial state to be the uniform superposition over computational basis states $ \ket{z} $,
\begin{equation}
\ket{s}=\frac{1}{2^{n/2}}\sum_z\ket{z}.
\end{equation}

We then define
\begin{equation}
\ket{\gamma,\beta}=e^{-i\beta B}e^{-i\gamma C}\ket{s}
\end{equation}
where $ B=\sum_{i=1}^{n}X_i $ is the sum of all Pauli matrices $ X $ on single qubits, and $ C=C(Z) $ is the objective operator as defined by the objective function, with the variables $ z $ replaced by Pauli matrix $ Z $.

By varying the angles $ \beta $ and $ \gamma $, we vary the weights of different bit strings in $ \ket{\gamma,\beta} $. So if $ \beta=\gamma=0 $, then $ \bra{\gamma,\beta}C\ket{\gamma,\beta}=\mu m $, which is the expected number of constraints satisfied by a uniformly random assignment. The goal is to optimize
\begin{equation}\label{qexp}
\bra{\gamma,\beta}C\ket{\gamma,\beta},
\end{equation}
by picking $ \beta $ and $ \gamma $ wisely. Once we've found good choices for $ \beta $ and $ \gamma $, we can measure $ \ket{\gamma,\beta} $ in the computational basis, to find actual assignments $ \ket{z} $ which optimize $ C(z) $.

We first show how QAOA can be implemented on Max-2XOR to obtain an advantage of $ \Omega(1/\sqrt{D}) $ for the typical cases, assuming we construct an instance by picking the sign in each constraint $ \psi(x)=\frac{1}{2}\pm\frac{1}{2}x_ix_j $ to be random.

\section{QAOA for MAX2XOR}\label{sec:MAX2XOR}
Firstly, we look at Max-2XOR where each variable appears in at most $ D+1 $ constraints. The $ +1 $ is for later convenience.

For Max-2XOR, the constraints are of the form $ \psi(z)=\frac{1}{2}+\frac{1}{2}d_{ij}z_iz_j $, where $ d_{ij}=\pm1 $ with probability 1/2 for each sign. WLOG, we look at an instance involving bits 1 and 2. We separate out the term $\frac{1}{2}d_{12}Z_1Z_2$ from the rest of the clauses:
\begin{equation}
C = \bar{C} + \frac{1}{2}d_{12}Z_1Z_2.
\end{equation}
The contribution of this term to the overall expectation \ref{qexp} is
\begin{align}
&\frac{1}{2}\bra{s}e^{i\gamma C}e^{i\beta B}d_{12}Z_1Z_2e^{-i\beta B}e^{-i\gamma C}\ket{s}\label{MAX2XOR}\\
=&\frac{1}{2}d_{12}\bra{s}e^{i\gamma\bar{C}}e^{i\gamma d_{12}Z_1Z_2/2}(\cos(2\beta)Z_1+\sin(2\beta)Y_1)(\cos(2\beta)Z_2+\sin(2\beta)Y_2)e^{-i\gamma d_{12}Z_1Z_2/2}e^{-i\gamma\bar{C}}\ket{s}\nonumber
\end{align}
where
\begin{align}
&d_{12}e^{i\gamma d_{12}Z_1Z_2/2}(\cos(2\beta)Z_1+\sin(2\beta)Y_1)(\cos(2\beta)Z_2+\sin(2\beta)Y_2)e^{-i\gamma d_{12}Z_1Z_2/2}\\
=&d_{12}(\cos^2(2\beta)Z_1Z_2+\sin^2(2\beta)Y_1Y_2+\sin(2\beta)\cos(2\beta)\cos\gamma(Z_1Y_2+Y_1Z_2))\nonumber\\
+&\sin(2\beta)\cos(2\beta)\sin\gamma(X_1+X_2)\nonumber
\end{align}

The relevant terms in $ \bar{C} $ are $ Z_1C_1+Z_2C_2 $, where
\begin{align}
&C_1=\frac{1}{2}\sum_{b\neq 1,2}d_{1b}Z_b\\
&C_2=\frac{1}{2}\sum_{b\neq 1,2}d_{2b}Z_b
\end{align}

So (\ref{MAX2XOR}) becomes
\begin{equation}
\frac{1}{2}\bra{s}\left(d_{12}\sin^2(2\beta)\sin(2\gamma C_1)\sin(2\gamma C_2)+\sin(2\beta)\cos(2\beta)\sin\gamma(\cos(2\gamma C_1)+\cos(2\gamma C_2))\right)\ket{s}.\label{result}
\end{equation}


Assuming that $ C_1 $ has $ D_1 $ terms and $ C_2 $ has $ D_2 $ terms,
\begin{align}
&\bra{s}\cos(2\gamma C_1)\ket{s}=(\cos\gamma)^{D_1}\\
&\bra{s}\cos(2\gamma C_2)\ket{s}=(\cos\gamma)^{D_2}
\end{align}

If we average over the clauses (i.e. the $ d_{ab} $'s), the first term in (\ref{result}) drops out. By picking $ \beta=\frac{\pi}{8} $, $ \gamma=\frac{g}{\sqrt{D}} $, and remembering $ D_1,D_2\leq D $, we then obtain that the average of (\ref{result})
is, in the large $ D $ limit, at least
\begin{equation}
\frac{1}{2}\frac{g}{\sqrt{D}}\exp\left(-\frac{g^2}{2}\right)
\end{equation}
which can be maximized to
\begin{equation}
\frac{1}{2\sqrt{e}\sqrt{D}}
\end{equation}
if we take $ g=1 $.

Summing over all the constraints, this means when averaged over the possible choices of the clauses, our quantum algorithm satisfies an expected fraction of
\begin{equation}
\left(\frac{1}{2}+\frac{1}{2\sqrt{e}\sqrt{D}}\right)
\end{equation}
of the constraints, or equivalently a total number of
\begin{equation}
\left(\frac{1}{2}+\frac{1}{2\sqrt{e}\sqrt{D}}\right)m
\end{equation}
constraints. As we will show in Section \ref{subsec:variance}, the variance of this quantity with respect to the choice of the clauses is $ O(mD^2) $, and therefore for most instances of the CSP this ratio is achievable.

Note that this does not imply we can optimize every Max-2XOR instance to $ \Omega(1/\sqrt{D}) $. For example, consider Max-Cut on copies of $ (D+1) $-cliques. The advantage can only be $ \Omega(1/D) $. This is contrary to the case of Max-3XOR: Barak \textit{et al}. showed an advantage of $ \Omega(1/\sqrt{D}) $ can be obtained for every Max-3XOR instance using a classical algorithm\cite{Barak15}, whereas Farhi \textit{et al}. showed an advantage of $ \Omega(1/\sqrt{D}) $ can be obtained for typical Max-3XOR instances using QAOA\cite{Farhi1412}. In a sense, Max-Cut is the \textit{worst} case instance for Max-2XOR.

Barak \textit{et al}. pointed out that for Max-2XOR on any $D$-regular graph on $ n $ vertices, if we construct a Max-2XOR instance by choosing the constraint on each edge to be random, with high probability, all assignments $x$ will have $ \left|\text{val}(x)-\frac{1}{2}\right|\leq O(1/\sqrt{D}) $\cite{Barak15}. On the other hand, our result (Theorem \ref{typicality}) holds regardless of the underlying graph. Hence, for almost all instances constructed from an underlying constraint hypergraph, we can find an assignment achieving an advantage of $ \Omega(1/\sqrt{D}) $. This is optimal by the above argument.
\section{QAOA for CSPs with typicality}\label{sec:typicality}
In this section, we first establish the result for \textit{generalized} Max-XOR, then extend our result to general CSPs. By \textit{generalized} Max-XOR, we mean $ \psi_l(x)=\mu+\hat{\psi}_l(K_l)x^{K_l} $, where $ K_l $ is a subset of $ [n] $ with $ |K_l|=k_l\leq k $. $ k_l $ may be different for different $ l $. For notational simplicity, sometimes we write $ x^{K_l}=x_{i_1}\cdots x_{i_{k_l}} $ and $ \hat{\psi}_l(K_l)=\hat{\psi}_l $ for $ K_l=\{i_1,\dots,i_{k_l}\} $. Here we've implicitly relaxed the image of $ \psi(x) $ to $ \mathbb{R} $. To recover Max-$ k $XOR, one just chooses $ \hat{\psi}_l(K_l) $ to be $ \pm1/2 $ at random, and $ k_l=k $ for all $ l $. Here, we also relax $ \hat{\psi}_l(K_l) $ to be some general random variable with 0 mean.
\subsection{Max-XOR}\label{subsec:XOR}
\begin{lem}
For generalized Max-XOR, there is a quantum algorithm which finds an assignment satisfying $ \mu+\Omega(1/\sqrt{D}) $ of the constraints for typical instances.
\end{lem}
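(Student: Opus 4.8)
The plan is to mimic the Max-2XOR calculation from Section~\ref{sec:MAX2XOR} but for a general predicate of the form $\psi_l(x) = \mu + \hat\psi_l\, x^{K_l}$. Fix one constraint, say $C_1 = (\psi_1, K_1)$ with $K_1 = \{1, \dots, k_1\}$, and split the objective operator as $C = \bar C + \hat\psi_1 Z_1 Z_2 \cdots Z_{k_1}$. The contribution of this single term to $\bra{\gamma,\beta} C \ket{\gamma,\beta}$ is $\hat\psi_1 \bra{s} e^{i\gamma C} e^{i\beta B} Z_1 \cdots Z_{k_1} e^{-i\beta B} e^{-i\gamma C}\ket{s}$. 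First I would conjugate by $e^{-i\beta B}$: since $B$ is a sum of single-qubit $X$'s, each $Z_j$ becomes $\cos(2\beta) Z_j + \sin(2\beta) Y_j$, so the product $Z_1\cdots Z_{k_1}$ turns into a product of these commuting single-qubit operators. Then I would conjugate by $e^{-i\gamma C}$. The key simplification, exactly as in the $k=2$ case, is that only the terms of $\bar C$ that share a qubit with $K_1$ matter, and (after discarding cross-terms that vanish on average over the random signs/coefficients $\hat\psi_l$) the phases rotate each $Y_j$ into $\cos(2\gamma C_j) Y_j + \sin(2\gamma C_j)\cdot(\text{stuff})$, where $C_j = \frac{1}{2}\sum_{\text{other constraints touching }j} \hat\psi \, Z^{(\text{rest})}$ collects the neighbors of qubit $j$ (a sum with at most $D$ terms each, since the degree is bounded).

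Next I would take the expectation over $\ket{s}$ and over the random coefficients $\hat\psi_l$. Averaging over the $\hat\psi_l$ in $\bar C$ (which have zero mean and are independent of $\hat\psi_1$) kills every term in the conjugated operator except the one that contributes a power of $\cos(2\gamma C_j)$ together with a factor of $\sin(2\beta)\cos(2\beta)\sin\gamma$ (or the appropriate $\sin/\cos$ combination) from the rotation of the remaining $Z$'s into $X$'s. Using $\bra{s}\cos(2\gamma C_j)\ket{s} = (\cos\gamma)^{D_j}$ with $D_j \le D$ — the same identity used in the 2XOR section — and a multinomial count of which subset of the $k_1$ qubits gets rotated to $X$ versus kept as $Z$, the averaged single-constraint contribution becomes something like $\hat\psi_1^{\,\text{eff}} \cdot c_k \,(\sin 2\beta)^{k_1}(\cos\gamma)^{D_1 + \cdots + D_{k_1}}\cdot(\text{power of }\sin\gamma)$, up to a combinatorial constant $c_k$ depending only on $k$. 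Setting $\beta = \pi/8$ and $\gamma = g/\sqrt D$, so that $(\cos\gamma)^{kD} \to e^{-g^2 k/2}$ and $\sin\gamma \approx g/\sqrt D$, each constraint gains $\Omega(1/\sqrt D)$ over $\mu$; summing over all $m$ constraints and optimizing $g$ (a constant depending on $k$) gives the claimed $\mu + \Omega(1/\sqrt D)$ average fraction. The variance control needed to pass from the average to "typical instances" is deferred to Section~\ref{subsec:variance} and Theorem~\ref{typicality}, so here I only need the expectation bound.

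The main obstacle is bookkeeping rather than conceptual: carefully tracking the conjugation of a $k$-fold product of single-qubit operators through $e^{-i\gamma C}$, and verifying that after averaging over the random $\hat\psi_l$'s the \emph{only} surviving term is the "good" one — in particular that all the cross-terms mixing $Y$'s on different qubits of $K_1$, and all terms coupling $C_1$'s qubits to shared neighbors, either vanish by independence/zero-mean or are lower order in $1/\sqrt D$. One subtlety worth isolating is constraints that overlap $K_1$ in \emph{more than one} variable: these could in principle produce correlated phases, but since such "overlapping" pairs are $O(D)$ in number out of the $\Theta(D)$ neighbors and each contributes only $O(1/D)$ rather than $O(1/\sqrt D)$, they are absorbed into the error term. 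Once this is set up cleanly for generalized Max-XOR, extending to general CSPs with typicality follows by Fourier-expanding $\psi_l = \mu + \sum_{\emptyset \ne K \subseteq K_l}\hat\psi_l(K) x^K$ and applying the same argument to the top-degree Fourier term $K = K_l$ (the case the paper says it will prove in the main text), with lower-degree terms handled by the same estimate at their own arity.
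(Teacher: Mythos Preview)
Your high-level plan (split off the $u$-th clause, conjugate $Z^{K_u}$ by $e^{i\beta B}$ to get a product of $(\cos(2\beta)Z_j+\sin(2\beta)Y_j)$, then conjugate by $e^{i\gamma C}$ and average over the random $\hat\psi_l$'s) matches the paper. But the details of which term survives and why are wrong in a way that makes the argument break for $k>2$.

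First, the ``good'' term is not the one with all $k_u$ qubits rotated to $Y$; it is the $q=1$ piece of the expansion
\[
\hat\psi_u\prod_{i\in K_u}(\cos(2\beta)Z_i+\sin(2\beta)Y_i)=\sum_{p+q=k_u}\cos(2\beta)^p\sin(2\beta)^q\,\hat\psi_u\,Z_{s_1}\cdots Z_{s_p}Y_{t_1}\cdots Y_{t_q}.
\]
Conjugating by the clause's own phase $e^{i\gamma\hat\psi_u Z^{K_u}}$ turns the $q$ odd pieces into $\hat\psi_u\sin(2\gamma\hat\psi_u)\,X_{t_1}\cdots X_{t_q}$ (up to sign); after taking $\mathbb{E}_{\hat\psi_u}$ with $\gamma=g/\sqrt D$ this produces a $2g\,\mathrm{Var}[\hat\psi_u]/\sqrt D$ prefactor for \emph{every} odd $q$, not just one value of $q$. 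So you do \emph{not} end up with a single surviving term of the form $c_k(\sin 2\beta)^{k_1}(\cos\gamma)^{D_1+\cdots+D_{k_1}}$. The $q=1$ contribution is $\cos(2\beta)^{k_u-1}\sin(2\beta)\sum_{j\in K_u}\mathbb{E}_\psi\bra{s}e^{i\gamma\bar C}X_j e^{-i\gamma\bar C}\ket{s}$, i.e.\ a \emph{sum} over the qubits of $K_u$, not a product, and each summand is $\prod_{l\in G_u(j)}(1-2g^2\mathrm{Var}[\hat\psi_l]/D)\ge e^{-2g^2\max\mathrm{Var}}+O(1/D)$. (The identity $\bra{s}\cos(2\gamma C_j)\ket{s}=(\cos\gamma)^{D_j}$ is special to $\hat\psi_l=\pm\tfrac12$; in the general setting the paper instead expands $\mathbb{E}_\psi[\cos(2\gamma\hat\psi_l)]=1-2g^2\mathrm{Var}[\hat\psi_l]/D+O(D^{-2})$ and uses independence of the $\hat\psi_l$.)

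Second, and more importantly, averaging over the $\hat\psi_l$'s does \emph{not} kill the $q\ge 3$ odd terms. These survive with the same $1/\sqrt D$ prefactor and with uncontrolled sign, so you must bound them; the paper uses the crude estimate $|\bra{s}e^{i\gamma\bar C}X_{t_1}\cdots X_{t_q}e^{-i\gamma\bar C}\ket{s}|\le 1$ and collects a correction $-\sum_{q\ge 3,\,q\text{ odd}}\binom{k_u}{q}|\cos(2\beta)^{k_u-q}\sin(2\beta)^q|$. With your choice $\beta=\pi/8$ (so $\sin 2\beta=\cos 2\beta=1/\sqrt2$) this correction is $2^{-k_u/2}\sum_{q\ge 3}\binom{k_u}{q}$, which for $k_u\ge 4$ swamps the $q=1$ term and drives the bound negative. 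The fix the paper uses is to take $\beta$ small (depending on $k$ but not on $D$): the $q=1$ term is linear in $\beta$ with positive coefficient while the corrections are $O(\beta^3)$, so a sufficiently small $\beta$ makes the whole bracket positive and yields $c/\sqrt D$.
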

The objective operator is 
\begin{equation}
\hat{\psi}_lZ^{K_l}
\end{equation}
where we've dropped $ \mu $. Given parameters $ \beta $ and $ \gamma $, we define the state
\begin{equation}
\ket{\gamma,\beta}=e^{-i\beta B}e^{-i\gamma C}\ket{s}
\end{equation}
where $ B=X_1+\cdots+X_n $ and $ C=\sum_l\hat{\psi}_lZ^{K_l} $. We wish to evaluate
\begin{equation}
\mathbb{E}_\psi\left[\bra{\gamma,\beta}C\ket{\gamma,\beta}\right]\label{MaxkXORoddexp}
\end{equation}
for some fixed values of $ \gamma $ and $ \beta $. Here $ \mathbb{E}_\psi\equiv\mathbb{E}_{\hat{\psi}_1,\dots,\hat{\psi}_m} $, where $ \hat{\psi}_i $'s are independent random variables.

The $ u $-th term in the quantum expectation (\ref{MaxkXORoddexp}) is
\begin{align}\label{koneterm}
&\mathbb{E}_\psi\left[\bra{s}e^{i\gamma C}e^{i\beta B}\hat{\psi}_uZ_{i_1}\cdots Z_{i_{k_u}}e^{-i\beta B}e^{-i\gamma C}\ket{s}\right]\\
=&\mathbb{E}_\psi\left[\bra{s}e^{i\gamma \bar{C}}\exp(i\gamma \hat{\psi}_uZ_{i_1}\cdots Z_{i_{k_u}})\hat{\psi}_u\prod_{i\in K_u}(\cos(2\beta)Z_i+\sin(2\beta)Y_i)\exp(-i\gamma \hat{\psi}_uZ_{i_1}\cdots Z_{i_{k_u}})e^{-i\gamma \bar{C}}\ket{s}\right]\nonumber
\end{align}
where $ \bar{C}=\sum_{l\neq u}\hat{\psi}_lZ^{K_l} $, and
\begin{align}\label{product}
	&\hat{\psi}_u\prod_{i\in K_u}(\cos(2\beta)Z_i+\sin(2\beta)Y_i)\\
	=&\sum_{\substack{p+q=k_u\\s_1,\dots,s_p,t_1,\dots,t_q\in K_u}}\cos(2\beta)^p\sin(2\beta)^q\hat{\psi}_uZ_{s_1}\cdots Z_{s_p}Y_{t_1}\cdots Y_{t_q}.\nonumber
\end{align}
If $ q $ is even, then
\begin{equation}\label{even}
	\exp(i\gamma \hat{\psi}_uZ_{i_1}\cdots Z_{i_{k_u}})\hat{\psi}_uZ_{s_1}\cdots Z_{s_p}Y_{t_1}\cdots Y_{t_q}\exp(-i\gamma \hat{\psi}_uZ_{i_1}\cdots Z_{i_{k_u}})=\hat{\psi}_uZ_{s_1}\cdots Z_{s_p}Y_{t_1}\cdots Y_{t_q}
\end{equation}
and these terms do not contribute to the expectation in (\ref{koneterm}).\\
If $ q $ is odd, then
\begin{align}\label{odd}
	&\exp(i\gamma\hat{\psi}_uZ_{i_1}\cdots Z_{i_{k_u}})\psi_uZ_{s_1}\cdots Z_{s_p}Y_{t_1}\cdots Y_{t_q}\exp(-i\gamma \hat{\psi}_uZ_{i_1}\cdots Z_{i_{k_u}})\\
	=&\hat{\psi}_u\cos(2\gamma\hat{\psi}_u) Z_{s_1}\cdots Z_{s_p}Y_{t_1}\cdots Y_{t_q}-i^{q+1}\hat{\psi}_u\sin(2\gamma\hat{\psi}_u) X_{t_1}\cdots X_{t_q}\nonumber
\end{align}
and only the second term contributes to (\ref{koneterm}) in leading order.\\
Take $ \gamma=g/\sqrt{D} $. The expectation of (\ref{koneterm}) over $ \hat{\psi}_u $ is
\begin{align}
&\frac{2g\mathrm{Var}[\hat{\psi}_u]}{\sqrt{D}}\sum_{\substack{q ~\text{odd}\\\{t_1,\dots t_q\}\subset K_u}}-i^{q+1}\cos(2\beta)^{k_u-q}\sin(2\beta)^q\bra{s}e^{i\gamma\bar{C}}X_{t_1}\cdots X_{t_q}e^{-i\gamma\bar{C}}\ket{s}+O\left(\frac{1}{D}\right)\nonumber\\
\geq&\frac{2g\mathrm{Var}[\hat{\psi}_u]}{\sqrt{D}}\bigg(\cos(2\beta)^{k_u-1}\sin(2\beta)\bra{s}e^{i\gamma\bar{C}}(X_{i_1}+\cdots +X_{i_{k_u}})e^{-i\gamma\bar{C}}\ket{s}\nonumber\\
&~~~~~~~~~~~~~-\sum_{q=3, ~q~\text{odd}}^{2\lfloor(k_u-1)/2\rfloor+1}|\cos(2\beta)^{k_u-q}\sin(2\beta)^q|\binom{k_u}{q}\bigg)+O\left(\frac{1}{D}\right)\label{kave}
\end{align}
where we've used
\begin{equation}
	\left|\bra{s}e^{i\gamma\bar{C}}X_{t_1}\cdots X_{t_q}e^{-i\gamma\bar{C}}\ket{s}\right|\leq 1
\end{equation}
and
\begin{align}
&\mathbb{E}_\psi[\cos(2\gamma \hat{\psi}_u)]=1-\frac{2g^2\mathrm{Var[\hat{\psi}_u]}}{D}+O\left(\frac{1}{D^2}\right)\\
&\mathbb{E}_\psi[\sin(2\gamma \hat{\psi}_u)]=O\left(\frac{1}{D^{3/2}}\right)\\
&\mathbb{E}_\psi [\hat{\psi}_u\cos(2\gamma \hat{\psi}_u)]=O\left(\frac{1}{D}\right)\label{psicos}\\
&\mathbb{E}_\psi [\hat{\psi}_u\sin(2\gamma \hat{\psi}_u)]=\frac{2g\mathrm{Var}[\hat{\psi}_u]}{\sqrt{D}}+O\left(\frac{1}{D^{3/2}}\right)\label{psisin}.
\end{align}
Define
\begin{equation}
G_u(j)=\{l: \mathcal{S}_l \ni x_j \;\text{with} \;l\neq u\}
\end{equation}
These are the constraints which overlap $ \mathcal{S}_u $ at $ x_j $. We've assumed bounded occurrence, so each bit is in no more than $ D+1 $ clauses. 
Since $ \hat{\psi}_l $ and $ \hat{\psi}_{l^\prime} $ are independent random variables,
\begin{align}\label{kexp1}
&\mathbb{E}_\psi[\bra{s}e^{i\gamma\bar{C}}X_je^{-i\gamma\bar{C}}\ket{s}]\\
=&\bra{s}\mathbb{E}_\psi\left[\prod_{l\in G_u(j)}[\cos(2\gamma \hat{\psi}_l)+i\sin(2\gamma \hat{\psi}_l)Z^{K_l}]X_j\right]\ket{s}\nonumber\\
=&\bra{s}\prod_{l\in G_u(j)}\left(1-\frac{2g^2\mathrm{Var}[\hat{\psi}_l]}{D}+O\left(\frac{1}{D^{3/2}}\right)Z^{K_l}\right)X_j\ket{s}+O\left(\frac{1}{D^2}\right)\nonumber\\
=&\prod_{l\in G_u(j)}\left(1-\frac{2g^2\mathrm{Var}[\hat{\psi}_l]}{D}\right)+O\left(\frac{1}{D}\right)\nonumber
\end{align}
where we've used the fact that $ \bra{s}Z^{K_l}X_j\ket{s}=0 $.

Since
\begin{equation}
\prod_{l\in G_u(j)}\left(1-\frac{2g^2\mathrm{Var}[\hat{\psi}_l]}{D}\right)
\geq\exp(-2g^2\text{max}\mathrm{Var}[\hat{\psi}_l])+O\left(\frac{1}{D}\right),
\end{equation}
the expectation of (\ref{kave}) over $ \psi $ is lower bounded as
\begin{align}\label{kave1}
	&\mathbb{E}_\psi\left[\bra{s}e^{i\gamma C}e^{i\beta B}\hat{\psi}_uZ_{i_1}\cdots Z_{i_{k_u}}e^{-i\gamma B}e^{-i\gamma C}\ket{s}\right]\\
	\geq&\frac{2g\mathrm{Var}[\hat{\psi}_u]}{\sqrt{D}}\left(\cos(2\beta)^{k_u-1}\sin(2\beta)k_u\exp(-2g^2\text{max}\mathrm{Var}[\hat{\psi}_l])-\sum_{q=3,~q~\text{odd}}^{2\lfloor(k_u-1)/2\rfloor+1}|\cos(2\beta)^{k_u-q}\sin(2\beta)^q|\binom{k_u}{q}\right)\nonumber\\
	&+O\left(\frac{1}{D}\right).\nonumber
\end{align}

In the second last line, if one expands in $ \beta $, the term linear in $ \beta $ has positive coefficient. So one can always optimize $ \beta $ (which would depend on $ k_u $ but not $ D $) to make the coefficient of $ \frac{1}{\sqrt{D}} $ positive, so that the average of (\ref{koneterm}) is bounded below by
\begin{equation}
	\frac{c}{\sqrt{D}}
\end{equation}
for some constant.
\subsection{General CSPs with typicality}\label{subsec:typicality}
Now we show Theorem \ref{typicality} can be achieved with QAOA.
Consider a general predicate of the form
\begin{equation}
\psi_l(Z)=\sum_{K\subseteq K_l}\hat{\psi}_l(K)Z^K=\sum_{K\subset K_l}\hat{\psi}_l(K)Z^K+\hat{\psi}_lZ^{K_l}
\end{equation}
in which the second summation is over all proper subsets of $ K_l $. We assume that $ \hat{\psi}_l$ is not identically zero; for the general case, please refer to the appendix.

We'll evolve with the truncated Hamiltonian
\begin{equation}
\ket{\gamma,\beta}=e^{-i\beta B}e^{-i\gamma C}\ket{s}
\end{equation}
with $ B $ the same as before, but now $ C=\sum_l\hat{\psi}_lZ^{K_l} $. $ C $ is the sum of the highest degree terms from $ \psi(Z) $. Note that in the original formulation of QAOA, one has to evolve the full Hamiltonian, i.e. take $ C=\sum_l\psi_l(Z) $ instead.

We wish to evaluate
\begin{equation}\label{oddgeneralexp}
\mathbb{E}_\psi\left[\bra{\gamma,\beta}\sum_l\psi_l(Z)\ket{\gamma,\beta}\right].
\end{equation}

Setting $ \gamma=1/\sqrt{D} $, the contribution from the highest degree terms of $ \sum_l\psi_l(Z) $ to (\ref{oddgeneralexp}) is
\begin{equation}
\mathbb{E}_\psi\left[\bra{\gamma,\beta}\sum_l\hat{\psi}_lZ^{K_l}\ket{\gamma,\beta}\right].
\end{equation}
This is identical to that of \textit{generalized} Max-XOR. Now we show the contribution of the remaining terms is at most of order $ O(1/D) $.

WLOG, look at a lower degree term of the form $ \hat{\psi}_u(K)Z^K $ for $ K\subsetneq K_u $. It contributes as
\begin{equation}\label{oddgeneralsub}
\mathbb{E}_\psi\left[\bra{s}e^{i\gamma C}e^{i\beta B}\hat{\psi}_u(K)Z^Ke^{-i\beta B}e^{-i\gamma C}\ket{s}\right].
\end{equation}

As in (\ref{product}), it breaks up into terms of the form
\begin{equation}\label{generalcsponeterm}
\cos(2\beta)^p\sin(2\beta)^q\mathbb{E}_\psi\left[\hat{\psi}_u(K)\bra{s}e^{i\gamma \bar{C}}e^{i\gamma\hat{\psi}_uZ^{K_u}}Z_{s_1}\cdots Z_{s_p}Y_{t_1}\cdots Y_{t_q}e^{-i\gamma\hat{\psi}_uZ^{K_u}}e^{-i\gamma C}\ket{s}\right]
\end{equation}
where $ s_1,\dots,s_p,t_1,\dots,t_q\in K $, $ p+q=|K| $, and $ \bar{C}=\sum_{l\neq u}\hat{\psi}_lZ^{K_l} $. Since $ K\subsetneq K_u $, $ |K|<k_u $.

Note that $ \hat{\psi}_u $ and $ \hat{\psi}_u(K) $ need not be independent, so $ \mathbb{E}_\psi[\hat{\psi}_u(K)]=0 $ does not imply these terms do not contribute.

If $ q $ is even, then $ Z^{K_u} $ and $ Z_{s_1}\cdots Z_{s_p}Y_{t_1}\cdots Y_{t_q} $ commute and (\ref{oddgeneralsub}) becomes
\begin{equation}
\cos(2\beta)^p\sin(2\beta)^q\mathbb{E}_\psi\left[\hat{\psi}_u(K)\bra{s}e^{i\gamma\bar{C}} Z_{s_1}\cdots Z_{s_p}Y_{t_1}\cdots Y_{t_q}e^{-i\gamma\bar{C}}\ket{s}\right].
\end{equation}
This vanishes, since $ \mathbb{E}_\psi[\hat{\psi}_u(K)]=0 $.

If $ q $ is odd, then (\ref{generalcsponeterm}) becomes
\begin{align}\label{qodd}
&\cos(2\beta)^p\sin(2\beta)^q\mathbb{E}_\psi\bigl[\hat{\psi}_u(K)\bra{s}e^{i\gamma\bar{C}}(\cos(2\gamma\hat{\psi}_u)Z_{s_1}\cdots Z_{s_p}Y_{t_1}\cdots Y_{t_q}\\
&~~~~~~~~~~~~~~~~~~~~~~~~~~~~~~~~~~~~-i^{q+1}\sin(2\gamma\hat{\psi}_u)X_{t_1}\cdots X_{t_q}Z^{K_u\setminus K})e^{-i\gamma\bar{C}}\ket{s}\bigr].\nonumber
\end{align}

The first term ($ \cos(2\gamma\hat{\psi}_u) $ term) of (\ref{qodd}) is at most $ O(1/D) $, since $ \mathbb{E}_\psi[\hat{\psi}_u(K)\cos(2\gamma\hat{\psi}_u)]=O(1/D) $. The second term ($ \sin(2\gamma\hat{\psi}_u) $ term) gives a contribution of
\begin{equation}
\cos(2\beta)^p\sin(2\beta)^q\frac{-i^{q+1}2g\mathbb{E}_\psi[\hat{\psi}_u(K)\hat{\psi}_u]}{\sqrt{D}}\mathbb{E}_\psi\left[\bra{s}e^{i\gamma\bar{C}}X_{t_1}\cdots X_{t_q}Z^{K_u\setminus K}e^{-i\gamma\bar{C}}\ket{s}\right]+O\left(\frac{1}{D^{3/2}}\right),
\end{equation}
since
\begin{equation}
\mathbb{E}_\psi[\hat{\psi}_u(K)\sin(2\gamma\hat{\psi}_u)]=\frac{2g\mathbb{E}_\psi[\hat{\psi}_u(K)\hat{\psi}_u]}{\sqrt{D}}+O\left(\frac{1}{D^{3/2}}\right).
\end{equation}
Only the terms in $ \bar{C} $ that overlap with $ X_{t_1}\cdots X_{t_q} $ at odd number of bits contribute. So WLOG we assume $ \bar{C} $ only contains those terms.
\begin{align}
&\mathbb{E}_\psi\left[\bra{s}e^{i\gamma\bar{C}}X_{t_1}\cdots X_{t_q}Z^{K_u\setminus K}e^{-i\gamma\bar{C}}\ket{s}\right]\\
=&\mathbb{E}_\psi\left[\bra{s}\prod_l(\cos(2\gamma\hat{\psi}_l)+i\sin(2\gamma\hat{\psi}_l)Z^{K_l})X_{t_1}\cdots X_{t_q} Z^{K_u\setminus K}\ket{s}\right]\nonumber\\
=&\bra{s}\prod_l\left(1-\frac{2g^2\mathrm{Var}[\hat{\psi}_l]}{D}+iO\left(\frac{1}{D^{3/2}}\right)Z^{K_l}\right)X_{t_1}\cdots X_{t_q} Z^{K_u\setminus K}\ket{s}+O\left(\frac{1}{D^2}\right)\nonumber\\
=&O\left(\frac{1}{D}\right)\nonumber
\end{align}
since $ \bra{s}X_{t_1}\cdots X_{t_q} Z^{K_u\setminus K}\ket{s}=0 $ and $ \bra{s}Z^{K_l}X_{t_1}\cdots X_{t_q} Z^{K_u\setminus K}\ket{s}=0 $.

Therefore the second term in (\ref{qodd}) gives a contribution of $ O(1/D^{3/2}) $, and the total contribution of a lower degree term as in (\ref{generalcsponeterm}) is $ O(1/D) $. There are at most $ 2^k-2 $ lower degree terms in a predicate, which is constant if we treat $k$ as constant. So the contribution of all the lower degree terms is still $ O(1/D) $.

\subsection{Variance}\label{subsec:variance}
We've just shown that
\begin{equation}
	\mathbb{E}_\psi\left[\bra{\gamma,\beta}\psi_l(Z)-\mu\ket{\gamma,\beta}\right]=\Omega\left(\frac{1}{\sqrt{D}}\right),
\end{equation}
i.e.
\begin{equation}
	\mathbb{E}_\psi\left[\bra{\gamma,\beta}\sum_{l=1}^{m}\psi_l(Z)\ket{\gamma,\beta}\right]=\left(\mu+\Omega\left(\frac{1}{\sqrt{D}}\right)\right)m.
\end{equation}

Now we show that the variance of the quantity inside the expectation above, with respect to $ \psi $, is small. The variance is the sum of 
\begin{equation}\label{variance1}
\mathbb{E}_\psi\left[\bra{\gamma,\beta}\psi_l(Z)\ket{\gamma,\beta}\bra{\gamma,\beta}\psi_{l^\prime}(Z)\ket{\gamma,\beta}\right]-\mathbb{E}_\psi\left[\bra{\gamma,\beta}\psi_l(Z)\ket{\gamma,\beta}\right]\mathbb{E}_\psi\left[\bra{\gamma,\beta}\psi_{l^\prime}(Z)\ket{\gamma,\beta}\right]
\end{equation}
over $ l,l^\prime $.

Fix $ l $. Since we've assumed bounded degree, $ \bra{\gamma,\beta}\psi_l(Z)\ket{\gamma,\beta} $ involves at most $ k(k-1)D+k $ bits. Each of these bits is in at most $ D+1 $ clauses. So $ \bra{\gamma,\beta}\psi_l(Z)\ket{\gamma,\beta} $ is "linked" to at most $ (k(k-1)D+k)(D+1) $ clauses. If $ \psi_{l^\prime} $ is not one of these clauses, then (\ref{variance1}) is 0. Otherwise, since $ \psi_l(x)=0 $ or 1, (\ref{variance1}) is bounded by 1. So
\begin{equation}
\mathrm{Var}_{\psi}\left[\bra{\gamma,\beta}\sum_{l=1}^{m}\psi_l(Z)\ket{\gamma,\beta}\right]\leq m(k(k-1)D+k)(D+1).
\end{equation}
i.e, the standard deviation is $ O(\sqrt{m}D) $.

By Chebyshev's inequality, this implies that with probability at least $ 1 - O (D^3/m) $ over the choice of constraints, our choice of angles $ \beta $ and $ \gamma $ gives
\begin{align}
\bra{\gamma,\beta}\sum_{l=1}^{m}\psi_l(Z)\ket{\gamma,\beta}&=\left(\mu+\Omega\left(\frac{1}{\sqrt{D}}\right)\right)m\pm \sqrt{\frac{m}{D^3}}O(\sqrt{m}D) \\
&= \left(\mu+\Omega\left(\frac{1}{\sqrt{D}}\right)\right)m.
\end{align}


\begin{cor}
	For Max-$ k $XOR and Max-$ k $SAT, if we choose the signs in each constraint to be random and independent, then there is a quantum algorithm which finds an assignment satisfying $ \mu+\Omega(1/\sqrt{D}) $ of the constraints for typical instances.
\end{cor}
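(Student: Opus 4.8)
\emph{Proof plan.} The corollary should fall out of Theorem \ref{typicality} with essentially no extra work: the only thing to do is to check that Max-$k$XOR and Max-$k$SAT, under the stated random-sign model, really are bounded-degree CSPs with typicality, and that their top-degree Fourier coefficient is (always) nonzero so that the clean case of Sections \ref{subsec:XOR}--\ref{subsec:typicality} applies and no appeal to the appendix's general case is needed. So the plan is: (i) compute the Fourier expansion of the generic Max-$k$XOR predicate and verify the typicality identity $\mathbb{E}_\psi[\hat\psi(K)]=0$ for all $K\neq\emptyset$; (ii) do the same for Max-$k$SAT; (iii) note that in both models the coefficient vectors attached to distinct scopes are independent, and the highest-degree coefficient is never zero; (iv) invoke Theorem \ref{typicality}.

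For step (i): the generic Max-$k$XOR predicate is $\psi(x)=\tfrac12+\tfrac12\,d\,x_1\cdots x_k$ with $d=\pm1$ equiprobably, so $\hat\psi(\emptyset)=1/2=\mu$, $\hat\psi([k])=d/2$, and $\hat\psi(K)=0$ otherwise. Then $\mathbb{E}_\psi[\hat\psi(K)]=0$ for $K\neq\emptyset$ — trivially when $\hat\psi(K)\equiv0$, and because $\mathbb{E}[d]=0$ when $K=[k]$. The top coefficient $\hat\psi_l([k])=d/2$ is never $0$, and choosing $d$ independently per scope makes the $\hat\psi_l$ independent across constraints, exactly the hypothesis used in Section \ref{subsec:XOR}. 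For step (ii): the generic Max-$k$SAT predicate is $\psi(x)=1-\prod_{i=1}^k\frac{1+\epsilon_i x_i}{2}$ with i.i.d.\ uniform signs $\epsilon_i\in\{\pm1\}$; expanding,
\begin{equation}
\psi(x)=1-\frac{1}{2^k}\sum_{K\subseteq[k]}\Big(\prod_{i\in K}\epsilon_i\Big)x^K,
\end{equation}
so $\hat\psi(\emptyset)=1-2^{-k}=\mu$ and $\hat\psi(K)=-2^{-k}\prod_{i\in K}\epsilon_i$ for $K\neq\emptyset$. For any nonempty $K$ the sign $\prod_{i\in K}\epsilon_i$ is uniform on $\{\pm1\}$ and hence has zero mean, giving typicality; the top coefficient $\hat\psi_l=-2^{-k}\prod_{i=1}^k\epsilon_i$ is again never $0$, so the "highest-degree term covers the scope" hypothesis holds, and picking the signs independently for distinct scopes makes the coefficient vectors independent across clauses.

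For step (iv), the bounded-degree hypothesis of Theorem \ref{typicality} is just inherited from the bounded-degree assumption on the instances, so the theorem applies verbatim and yields, with probability $1-O(D^3/m)$ over the random signs, a QAOA-produced assignment (using the angles $\beta$ and $\gamma=g/\sqrt D$ fixed in the preceding sections) satisfying a $\mu+\Omega(1/\sqrt D)$ fraction of the constraints. I do not expect a genuine obstacle here; the one point that needs a moment's care is that within a single $k$SAT clause the lower-degree Fourier coefficients $\hat\psi_l(K)$ are \emph{correlated} with the top coefficient $\hat\psi_l$ — but this is precisely the situation already handled in Section \ref{subsec:typicality}, where it is observed that $\hat\psi_u(K)$ and $\hat\psi_u$ need not be independent and the lower-degree contributions are nonetheless bounded by $O(1/D)$. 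Hence the corollary follows.
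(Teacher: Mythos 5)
Your proposal is correct and follows the same route the paper intends: the paper gives no separate proof of this corollary, treating it as an immediate consequence of Theorem \ref{typicality} together with the observation in Section \ref{sec:preliminaries} that $k$XOR and $k$SAT have typicality. Your explicit Fourier expansions (including the check that the top coefficient $-2^{-k}\prod_i\epsilon_i$ is never zero for $k$SAT, so the ``highest degree term covers the scope'' case of Sections \ref{subsec:XOR}--\ref{subsec:typicality} suffices, and the remark that the correlated lower-degree coefficients are exactly what Section \ref{subsec:typicality} handles) simply fill in details the paper leaves implicit.
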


\subsection{$ k $-local Hamiltonian of bounded degree}
More recently, Harrow and Montanaro applied the classical algorithm to find the extremal eigenvalues of local Hamiltonians. They considered the "quantum analogue" of CSPs with bounded degree, and obtained the following result.
\begin{thm}[{\cite[Theorem~1]{Harrow1507}}]
	Let $ H $ be a traceless $k$-local Hamiltonian on $ n $ qubits such that $ k=O(1) $. Assume that $ H $ can be expressed as a weighted sum of $ m $ distinct Pauli terms such that each term is weight $\Theta(1)$, and each qubit participates in at most $ l $ terms. Then $ ||H||=\Omega(m/\sqrt{l}) $ and $ \lambda_{\mathrm{min}}(H)\leq -\Omega(m/l) $. In each case the bound is achieved by a product state which can be found efficiently using a classical algorithm.
\end{thm}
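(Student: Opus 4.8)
\emph{Proof strategy.} Both assertions will be obtained by exhibiting a \emph{product} state $\ket{\phi}=\bigotimes_i\ket{\phi_i}$ in which each $\ket{\phi_i}$ is an eigenstate of a single qubit Pauli: if $\bra{\phi}H\ket{\phi}\ge c\,m/\sqrt{l}$ then $\|H\|\ge c\,m/\sqrt{l}$, and if $\bra{\phi}H\ket{\phi}\le -c'\,m/l$ then $\lambda_{\min}(H)\le-c'm/l$, and such a state can be searched over efficiently. Write $H=\sum_{j=1}^{m}c_jP_j$ with $|c_j|=\Theta(1)$ and the $P_j$ distinct non-identity Pauli strings on $k_j\le k$ qubits. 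The plan is to recast maximization of $\bra{\phi}H\ket{\phi}$ over such product states as a bounded-degree CSP with zero baseline, and then reuse the QAOA analysis of Section~\ref{sec:typicality} verbatim for the $\Omega(m/\sqrt l)$ bound, while handling the $\Omega(m/l)$ bound by a conflict-graph argument.

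For the norm bound, I would parametrize $\ket{\phi_i}$ by a Pauli $a_i\in\{X,Y,Z\}$ and a sign $\epsilon_i\in\{\pm1\}$, so that $\bra{\phi_i}Q\ket{\phi_i}=\epsilon_i\,[a_i=Q]$ for a single qubit Pauli $Q$. Then $\bra{\phi}H\ket{\phi}=\sum_j c_j\prod_{i\in\mathrm{supp}(P_j)}\epsilon_i\,[a_i=(P_j)_i]$ is a polynomial in the variables $(a_i,\epsilon_i)$ whose mean over a uniformly random assignment is $0$ (each factor has mean $0$), and in whose Fourier expansion the monomial of top degree in the $\epsilon$-variables of term $j$ is supported on all of $\mathrm{supp}(P_j)$ --- exactly the ``highest-degree term covers the scope'' situation treated in Section~\ref{subsec:XOR}, and each qubit still sits in at most $l$ terms. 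Running QAOA with the mixer $B=\sum_i X_i$ on this object with $\gamma=\Theta(1/\sqrt{l})$ and repeating the interference estimate (\ref{product})--(\ref{kave1}), every term contributes $\Omega(1/\sqrt{l})$ while the cross terms and the lower-degree ($a$-character) pieces are suppressed by $O(1/l)$, just as in the proof of Theorem~\ref{typicality}; summing over the $m$ terms yields a product state with $\bra{\phi}H\ket{\phi}=\Omega(m/\sqrt{l})$. As a consistency check, for diagonal $H=\sum_j c_jZ^{S_j}$ this is literally Theorem~\ref{typicality} applied to the generalized Max-XOR instance $\psi_j=\tfrac12+\tfrac12 c_jZ^{S_j}$.

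For $\lambda_{\min}(H)\le-\Omega(m/l)$, I would form the conflict graph on the $m$ terms (an edge whenever two terms share a qubit); its maximum degree is at most $k(l-1)$, so a greedy independent set produces a family $S$ of $|S|\ge m/(k(l-1)+1)=\Omega(m/l)$ terms with pairwise disjoint supports. The restricted Hamiltonian $H_S=\sum_{j\in S}c_jP_j$ is then a sum of commuting, disjointly supported non-identity Paulis, with smallest eigenvalue $-\sum_{j\in S}|c_j|=-\Omega(m/l)$, attained by a product state on $\bigcup_{j\in S}\mathrm{supp}(P_j)$. This state is extended to all $n$ qubits and the contribution of the remaining, overlapping terms is bounded.

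The step I expect to be the main obstacle is precisely this last one: a naive random extension can leave a residual of order $m$, which would swamp the $-\Omega(m/l)$ gain, so the assignment must be chosen more carefully --- iteratively, so that the residual Hamiltonian stays controlled at every step --- which is the delicate part of \cite{Harrow1507}. The other point needing care, relative to the all-$Z$ Max-XOR argument, is that the $X$ and $Y$ factors of a term rotate under the mixer differently from the $Z$ factors, so the expansion in (\ref{product})--(\ref{odd}) must track all three Pauli types; the combinatorial structure (an even number of non-surviving factors kills a term, an odd number gives an $\Omega(1/\sqrt l)$ contribution) is unchanged, so this is bookkeeping rather than a genuine obstruction.
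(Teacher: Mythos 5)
First, a point of comparison that matters for how you should read this review: the paper does not prove this statement at all. It is quoted verbatim from Harrow and Montanaro \cite{Harrow1507} as an imported black box, used only to set up the corollary that follows it. So there is no in-paper proof to measure your attempt against, and your proposal must stand on its own as a proof of a \emph{worst-case} theorem about every such $H$. That is where the central gap lies. Your reduction of $\bra{\phi}H\ket{\phi}$ over Pauli-eigenbasis product states to a mean-zero, bounded-occurrence polynomial in the variables $(a_i,\epsilon_i)$ is the right opening move (and is essentially what \cite{Harrow1507} does), but the next step --- ``reuse the QAOA analysis of Section~\ref{sec:typicality} verbatim'' --- cannot work. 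Theorem~\ref{typicality} and the estimates (\ref{kave})--(\ref{kexp1}) are \emph{average-case} statements: the suppression of the cross terms rests entirely on expectations such as $\mathbb{E}_\psi[\hat{\psi}_l]=0$ and $\mathbb{E}_\psi[\sin(2\gamma\hat{\psi}_l)]=O(D^{-3/2})$ taken over a random choice of predicates. For a fixed $H$ the coefficients $c_j$ are fixed, no such expectation is available, and the paper itself exhibits the obstruction (Max-Cut on $(D+1)$-cliques, Section~\ref{sec:MAX2XOR}): for particular fixed coefficient patterns the achievable advantage in a prescribed sign direction is only $\Theta(1/D)$. The optimization variables $a_i,\epsilon_i$ do not supply the missing randomness. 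What actually rescues $\|H\|=\Omega(m/\sqrt{l})$ is that the operator norm lets you take \emph{either} extreme eigenvalue; the needed classical ingredient is a worst-case anticoncentration bound of the form $\max_x|f(x)|\geq c(k)\sum_j|\hat{f}(S_j)|/\sqrt{D}$ for bounded-occurrence degree-$k$ polynomials, proved by the random-restriction technique of Section~\ref{sec:classical}, not by the typicality argument. (The clique example is consistent with such a bound: $|f|$ is large there even though $f$ is not large in the ``wrong'' direction.)

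For the $\lambda_{\mathrm{min}}(H)\leq-\Omega(m/l)$ half, your conflict-graph degree bound $k(l-1)$ and the resulting independent set of $\Omega(m/l)$ disjointly supported terms are correct, and each selected term indeed admits a product eigenstate of value $-|c_j|$. But you have explicitly left the decisive step open: the $\Theta(m)$ residual terms that meet a selected term in the \emph{same} single-qubit Pauli basis each contribute an uncontrolled sign of magnitude $\Theta(1)$, so a careless extension can lose $+\Theta(m)$ and erase the $-\Omega(m/l)$ gain, exactly as you note. Identifying the obstacle is not the same as overcoming it; one needs a method-of-conditional-expectations argument in the spirit of H{\aa}stad's classical $\Omega(1/D)$-advantage result \cite{Hastad00} (which is precisely the statement this half quantizes) to choose the extension so that the residual is controlled at every step. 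As written, neither half of the proposal is a complete proof: the first half rests on an average-case/worst-case confusion, and the second half acknowledges its own missing core.
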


Using our algorithms (Theorem \ref{typicality}), it is immediate that the quantum-classical correspondence of \cite{Harrow1507} can be strengthened to give the typical ground state energies of such $ k $-local Hamiltonians.
\begin{cor}
	Let $ H $ be a traceless $k$-local Hamiltonian on $ n $ qubits such that $ k=O(1) $. Assume that $ H $ can be expressed as a weighted sum of $ m $ distinct Pauli terms such that each term is weight $\Theta(1)$, and each qubit participates in at most $ l $ terms. The weight is chosen independently at random, with zero mean. Then with high probability, $ \lambda_{\mathrm{min}}(H)\leq -\Omega(m/\sqrt{l}) $. This can be achieved by a product state which can be found efficiently using a quantum algorithm.
\end{cor}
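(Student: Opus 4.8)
The plan is to reduce the quantum problem to a \emph{classical} generalized Max-XOR instance with typicality -- essentially the reduction of Harrow and Montanaro \cite{Harrow1507}, except that we keep the Pauli weights random -- and then apply Theorem~\ref{typicality}, in the form of the Max-XOR lemma of Section~\ref{subsec:XOR} together with the variance bound of Section~\ref{subsec:variance}. Write $H=\sum_{l=1}^{m}w_l P_l$, where each $P_l$ is a non-identity Pauli string on a scope $K_l$ with $|K_l|=k_l\le k$, and the $w_l$ are independent, zero-mean, magnitude-$\Theta(1)$ random variables.

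First I would assign to each qubit $j$ a Pauli axis $a_j\in\{X,Y,Z\}$, and call a term $P_l$ \emph{surviving} (for this assignment) if, for every $j\in K_l$, $a_j$ equals the single-qubit Pauli by which $P_l$ acts on $j$. A uniformly random assignment makes $\mathbb{E}[\#\text{surviving}]=\sum_l 3^{-k_l}\ge 3^{-k}m$, so by the method of conditional expectations one can efficiently fix an assignment with $m'\ge 3^{-k}m=\Omega(m)$ surviving terms; crucially this choice depends only on the (deterministic) Pauli structure of $H$, not on the weights. Next, restrict attention to product states $\ket{\phi}=\bigotimes_j\ket{\phi_j}$ in which each $\ket{\phi_j}$ is a $\pm1$-eigenstate of $\sigma^{a_j}$. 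For such a state every non-surviving $P_l$ contains a factor $\bra{\phi_j}\sigma^b\ket{\phi_j}$ with $b\ne a_j$, hence contributes exactly $0$; while every surviving $P_l$ acts on these states exactly as the $Z$-string $Z^{K_l}$ (up to an overall sign fixed by $P_l$). Thus $\bra{\phi}H\ket{\phi}=\bra{z}\tilde C\ket{z}$ with $\tilde C=\sum_{l\ \mathrm{surv}}\tilde w_l Z^{K_l}$, where $z_j\in\{\pm1\}$ records the eigenvalue of $\ket{\phi_j}$ and $\tilde w_l=\pm w_l$: a generalized Max-XOR objective on $m'$ constraints of arity $\le k$, with every qubit in at most $l$ of them, and with coefficients that are independent and zero-mean -- i.e.\ an instance with typicality in the sense of Definition~\ref{typicalitydef}.

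Now apply the Max-XOR lemma of Section~\ref{subsec:XOR} to $-\tilde C=\sum_{l\ \mathrm{surv}}(-\tilde w_l)Z^{K_l}$ (negating zero-mean coefficients preserves typicality): running QAOA with transverse-field mixer, $\gamma=g/\sqrt l$ and the $\beta$ optimized there on the diagonal objective $-\tilde C$ prepares $\ket{\gamma,\beta}$ with $\mathbb{E}_w\!\left[\bra{\gamma,\beta}(-\tilde C)\ket{\gamma,\beta}\right]=\Omega(m'/\sqrt l)=\Omega(m/\sqrt l)$, and the variance bound of Section~\ref{subsec:variance} promotes this to $\bra{\gamma,\beta}(-\tilde C)\ket{\gamma,\beta}=\Omega(m/\sqrt l)$ with high probability over the weights, provided $m\gg l^3$ as in Theorem~\ref{typicality}. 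Measuring $\ket{\gamma,\beta}$ in the computational basis and keeping the best of polynomially many samples yields $z$ with $\bra{z}(-\tilde C)\ket{z}=\Omega(m/\sqrt l)$; applying the inverse single-qubit Cliffords that sent $\sigma^{a_j}\mapsto Z$ produces a product state $\ket{\phi}$ with $\bra{\phi}H\ket{\phi}=\bra{z}\tilde C\ket{z}\le-\Omega(m/\sqrt l)$, hence $\lambda_{\min}(H)\le-\Omega(m/\sqrt l)$, found efficiently by a quantum algorithm (classical colouring, then QAOA on a diagonal objective, then classical post-processing).

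The reduction itself is routine; the points that need care are (a) the sign bookkeeping in $P_l\mapsto\pm Z^{K_l}$ (signs from the $Y$-eigenstates and from any $i$-factors, absorbed into $w_l$), and (b) the claim that the $\tilde w_l$ remain independent and zero-mean -- which is precisely why the axis assignment must be chosen from the combinatorial structure alone, before the weights are revealed. One also invokes the standard QAOA sampling fact (a diagonal objective bounded by $O(m)$ with mean advantage $\Omega(m/\sqrt l)$ is beaten by a single computational-basis measurement with at least inverse-polynomial probability) and the efficiency of conditional expectations, which is immediate since every term is $O(1)$-local. I expect (b) -- maintaining the independence and zero-mean structure through the reduction, so that Sections~\ref{subsec:XOR} and~\ref{subsec:variance} apply verbatim -- to be the only genuinely subtle step; everything else is a direct translation, and nothing beyond Theorem~\ref{typicality} is required.
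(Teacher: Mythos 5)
Your proposal is correct and follows essentially the route the paper intends: the paper treats this corollary as an immediate combination of the Harrow--Montanaro product-state (Pauli-axis restriction) correspondence with Theorem~\ref{typicality}, and your reduction to a generalized Max-XOR instance with independent zero-mean coefficients, followed by the Max-XOR lemma and the variance bound, is exactly that combination spelled out. The care you take in point (b) -- fixing the axis assignment from the Pauli structure alone so the surviving coefficients stay independent and zero-mean -- is the right subtlety to isolate, and it goes through.
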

\section{Instances with "No Overlapping Constraints"}\label{sec:nooverlapping}
In this section, we present the proof of Theorem \ref{triangle}. We first define what we mean by "no overlapping constraints", then give the construction of our quantum algorithm.
\begin{defn}[{\cite[Definition~2.1]{Barak15}}]\label{nooverlapping}
An instance has \textit{no overlapping constraints} if the scopes of any two distinct constraints intersect on at most one variable.
\end{defn}

We claim that the quantum algorithm that we used above can give a similar result for instances with "no overlapping constraints".

Consider such an instance in which the highest degree term covers the entire scope. The general case will be similar.
As before, we'll evolve with the truncated Hamiltonian.
The contribution of a highest degree term is
\begin{align}\label{trianglefreeleading}
	&\bra{\gamma,\beta}\hat{\psi}_uZ^{K_u}\ket{\gamma,\beta}\\
	=&\bra{s}e^{i\gamma \bar{C}}\exp(i\gamma \hat{\psi}_uZ_{i_1}\cdots Z_{i_{k_u}})\hat{\psi}_u\prod_{i={i_1}}^{i_{k_u}}(\cos(2\beta)Z_i+\sin(2\beta)Y_i)\exp(-i\gamma \hat{\psi}_uZ_{i_1}\cdots Z_{i_{k_u}})e^{-i\gamma \bar{C}}\ket{s}.\nonumber
\end{align}

By "no overlapping constraints", the relevant terms in $ \bar{C} $ are of the form $ Z_jC_j $ only, where $ C_j=\partial_j(\sum_{l\neq u}\hat{\psi}_lZ^{K_l}) $ does not share the same variable as $ Z^{K_u} $. As $ \bra{+}Z\ket{+}=0 $, (\ref{trianglefreeleading}) becomes
\begin{align}\label{kxortriangle}
	&\hat{\psi}_u\sin(2\gamma\hat{\psi}_u)\sum_{\substack{q ~\text{odd}\\\{t_1,\dots t_q\}\subset K_u}}-i^{q+1}\cos(2\beta)^{k_u-q}\sin(2\beta)^q\bra{s}\cos(2\gamma C_{t_1})\cdots\cos(2\gamma C_{t_q})\ket{s}\\
	\geq&\hat{\psi}_u\sin(2\gamma\hat{\psi}_u)\biggl(\cos(2\beta)^{k_u-1}\sin(2\beta)\bra{s}(\cos(2\gamma C_{i_1})+\cdots+\cos(2\gamma C_{i_{k_u}}))\ket{s}\nonumber\\
	&~~~~~~~~~~~~~~~-\sum_{q=3, ~q~\text{odd}}^{2\lfloor(k_u-1)/2\rfloor+1}|\cos(2\beta)^{k_u-q}\sin(2\beta)^q|\binom{k_u}{q}\biggl).\nonumber
\end{align}

Again by "no overlapping constraints", the terms in $ C_j $ do not share the same variables. So
\begin{equation}\label{oneC}
	\bra{s}\cos(2\gamma C_j\ket{s}=\prod_{l\in G_u(j)}\cos(2\gamma\hat{\psi}_l).
\end{equation}

Take $ \gamma=g/\sqrt{D} $, (\ref{oneC}) can be lower bounded by
\begin{equation}
	\exp(-2g^2\text{max}\hat{\psi}_l^2)+O\left(\frac{1}{D}\right)
\end{equation}
and it can be shown that (\ref{kxortriangle}) is $ \frac{c}{\sqrt{D}} $ with some choice of $ g $ and $ \beta $.

It can be shown that lower degree terms do not contribute.

\section{Classical Algorithm}\label{sec:classical}
In this section, we show the classical algorithm by Barak \textit{et al.}\cite{Barak15} for triangle-free instances gives an assignment that, when averaged over all choices of constraints, satisfies a $\mu + \Omega(1/\sqrt{D})$ fraction of the constraints, for the case in which the highest degree term covers the entire scope. It remains unclear whether this algorithm can be applied to the most general case. Moreover, we do not know if this algorithm satisfies a $\mu + \Omega(1/\sqrt{D})$ fraction of the constraints on most instances; one way to show this might be to prove that the variance of the number of constraints satisfied is small. 

To start off, we first need the following results from Fourier analysis.

\begin{thm}[{\cite[Theorem~9.24]{Odonnell14}}]\label{inequality}
	Let $ f:\{-1,1\}^n\to \mathrm{R} $ be a non-constant function of degree at most $ k $. Then
	\begin{equation}
		\mathbf{P}\left[f(x)>\mathbb{E}_x[f]\right]\geq\frac{1}{4}e^{-2k}.
	\end{equation}
\end{thm}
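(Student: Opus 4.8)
The plan is to normalize and then apply hypercontractivity. Replacing $f$ by $(f-\mathbb{E}_x[f])/\|f-\mathbb{E}_x[f]\|_2$ changes only the empty Fourier coefficient and rescales by a positive constant, so it preserves the hypotheses (degree at most $k$, non-constant) and turns the event $\{f>\mathbb{E}_x[f]\}$ into $\{f>0\}$. Thus I may assume $\mathbb{E}_x[f]=0$ and $\|f\|_2=1$, and it suffices to prove $\mathbf{P}[f(x)>0]\geq\tfrac14 e^{-2k}$; note that a non-constant mean-zero function takes strictly positive values somewhere, so this probability is positive to begin with.

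First I would convert the tail bound into an $L^1$ estimate. Since $\mathbb{E}_x[f]=0$, the nonnegative quantities $\mathbb{E}_x[f\cdot\mathbf{1}[f>0]]$ and $\mathbb{E}_x[(-f)\cdot\mathbf{1}[f<0]]$ are equal, and their sum is $\mathbb{E}_x|f|=\|f\|_1$; hence $\mathbb{E}_x[f\cdot\mathbf{1}[f>0]]=\tfrac12\|f\|_1$. By Cauchy--Schwarz,
\[
\tfrac12\|f\|_1=\mathbb{E}_x[f\cdot\mathbf{1}[f>0]]\leq\|f\|_2\,\sqrt{\mathbf{P}[f>0]}=\sqrt{\mathbf{P}[f>0]},
\]
so $\mathbf{P}[f>0]\geq\tfrac14\|f\|_1^2$. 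The problem is therefore reduced to proving that a degree-$k$ function with $\|f\|_2=1$ has $\|f\|_1\geq e^{-k}$.

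For this I would use the $(2,q)$-hypercontractive (Bonami--Beckner) inequality, which in this setting we may simply cite (see \cite{Odonnell14}): for every $q>2$ and every $f$ of degree at most $k$, $\|f\|_q\leq(q-1)^{k/2}\|f\|_2=(q-1)^{k/2}$. Interpolating the $2$-norm between $L^1$ and $L^q$ with exponent $\theta=\tfrac{q}{2(q-1)}$ (so that $\tfrac12=(1-\theta)+\tfrac{\theta}{q}$), Hölder's inequality gives $1=\|f\|_2\leq\|f\|_1^{1-\theta}\|f\|_q^{\theta}\leq\|f\|_1^{1-\theta}(q-1)^{k\theta/2}$, and hence
\[
\|f\|_1\geq(q-1)^{-\frac{k}{2}\cdot\frac{\theta}{1-\theta}}=(q-1)^{-\frac{kq}{2(q-2)}}.
\]
Letting $q\to 2^+$ and writing $q=2+\varepsilon$, the exponent obeys $\tfrac{kq}{2(q-2)}\ln(q-1)=\tfrac{k(2+\varepsilon)}{2\varepsilon}\ln(1+\varepsilon)\to k$, so the right-hand side tends to $e^{-k}$; since the bound holds for every $q>2$, we get $\|f\|_1\geq e^{-k}$, and therefore $\mathbf{P}[f>0]\geq\tfrac14 e^{-2k}$.

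The only substantive input is the hypercontractive inequality, which is standard. The step that actually requires care --- and where the stated constant is easily lost --- is the optimization over $q$: using only the degree-$k$ Bonami estimate $\|f\|_4\leq 3^{k/2}\|f\|_2$ (i.e.\ fixing $q=4$) would give merely $\mathbf{P}[f>0]\geq\tfrac14\,9^{-k}$, which is weaker since $9>e^2$, so one genuinely needs to send $q\downarrow 2$. Beyond that, the only points to verify are the degenerate cases, which are excluded because a non-constant mean-zero function has $\|f\|_1>0$.
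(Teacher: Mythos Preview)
The paper does not supply its own proof of this statement; it is quoted as Theorem~9.24 of \cite{Odonnell14} and used as a black box in Section~\ref{sec:classical}. Your argument is correct and is in fact the standard proof from that reference: reduce to the mean-zero, unit-variance case, use Cauchy--Schwarz to get $\mathbf{P}[f>0]\ge\tfrac14\|f\|_1^2$, and then invoke the degree-$k$ hypercontractive bound $\|f\|_q\le (q-1)^{k/2}\|f\|_2$ together with the $L^1$--$L^q$ interpolation, optimizing $q\downarrow 2$ to obtain $\|f\|_1\ge e^{-k}\|f\|_2$.

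One small remark on presentation: the Cauchy--Schwarz step you wrote is fine, but note that this is really a Paley--Zygmund-style inequality, and you could alternatively phrase it (as O'Donnell does) via $\|f\|_1^2\le \|f\cdot \mathbf{1}[f\neq 0]\|_1^2$ and then H\"older, which gives the same bound but makes the anti-concentration structure a bit more transparent. Either way, the substance is identical, and your observation that fixing $q=4$ would only yield $\tfrac14\cdot 9^{-k}$ is exactly the point.
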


Applying it to $ f^2(x) $, which has degree at most $ 2k $, we have
\begin{equation}
	\mathbf{P}[|f(x)|>||f||_2]\geq\frac{1}{4}e^{-4k}
\end{equation}
and thus
\begin{equation}
	\mathbb{E}_x[|f(x)|]\geq\frac{1}{4}e^{-4k}||f||_2.
\end{equation}

\begin{lem}[{\cite[Lemma~2.2]{Barak15}}]\label{var}
	For any predicate $ \psi: \{-1,1\}^k\to \{0,1\} $, $ k\geq2 $, we have $ \mathrm{Var}[\partial_i\psi(x)]\geq2^{-k-2} $.
\end{lem}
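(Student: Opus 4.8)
We sketch a proof; the lemma is \cite[Lemma~2.2]{Barak15}.

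The plan is to reduce the claim to an elementary second‑moment estimate for the single function $g:=\partial_i\psi$. Since $\psi$ is $\{0,1\}$‑valued, $g(x)=\tfrac12\bigl(\psi(x^{i\mapsto 1})-\psi(x^{i\mapsto -1})\bigr)$ takes values only in the three‑element set $\{-\tfrac12,0,\tfrac12\}$, and $g$ does not depend on the coordinate $x_i$; so in computing $\mathrm{Var}[g]$ we may regard $g$ as a function on $\{-1,1\}^{k-1}$ under the uniform measure, in which every point carries mass $2^{-(k-1)}$.

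First I would invoke the identity $2\,\mathrm{Var}[g]=\mathbb{E}_{x,x'}\bigl[(g(x)-g(x'))^2\bigr]$ for $x,x'$ independent and uniform. Because $g$ takes only the values $-\tfrac12,0,\tfrac12$, the summand equals $0$ when $g(x)=g(x')$ and is at least $\tfrac14$ otherwise, so $2\,\mathrm{Var}[g]\ge \tfrac14\,\mathbf{P}\bigl[g(x)\neq g(x')\bigr]$. Writing $p_v=\mathbf{P}[g=v]$, the collision probability obeys $\sum_v p_v^2\le \max_v p_v$, hence $\mathbf{P}[g(x)\neq g(x')]\ge 1-\max_v p_v$. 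It therefore suffices to prove $\max_v p_v\le 1-2^{-(k-1)}$, since this yields $\mathrm{Var}[g]\ge \tfrac18\cdot 2^{-(k-1)}=2^{-k-2}$.

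The bound $\max_v p_v\le 1-2^{-(k-1)}$ is precisely the assertion that $g=\partial_i\psi$ is non‑constant: if $g$ is not constant, the set of points where $g$ differs from its most likely value is non‑empty and hence has measure at least $2^{-(k-1)}$. The main obstacle is thus only the bookkeeping needed to rule out degenerate predicates: $\partial_i\psi\equiv c$ forces either $c=0$, so that $\psi$ ignores $x_i$, or $c=\pm\tfrac12$, so that $\psi(x)=\tfrac{1\pm x_i}{2}$ is the $\pm$ dictator on $x_i$; for $k\ge 2$ each of these contradicts reading "predicate of arity $k$" as a function genuinely depending on all $k$ of its inputs, and this is exactly where the hypothesis $k\ge 2$ enters. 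Everything else is the two‑line computation above. (One can alternatively dispense with the second‑moment identity and estimate $\mathrm{Var}[g]$ by a direct case analysis: with $N=\#\{g=\tfrac12\}$ and $M=\#\{g=-\tfrac12\}$ among the $2^{k-1}$ points, $\mathrm{Var}[g]=\bigl(2^{k-1}(N+M)-(N-M)^2\bigr)/2^{2k}$, whose numerator, minimized over $1\le N+M\le 2^{k-1}$, $|N-M|\le N+M$, and $g$ non‑constant, equals $2^{k-1}-1$, again giving $\mathrm{Var}[g]\ge 2^{-k-2}$ for $k\ge 2$.)
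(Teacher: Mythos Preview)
Your argument is correct. Note that the paper does not actually prove this lemma; it is only stated with a citation to \cite{Barak15}, and the \texttt{proof} environment that follows it in Section~\ref{sec:classical} is the proof of the section's main claim about the classical algorithm, not of Lemma~\ref{var}. So there is no ``paper's proof'' to compare against here.

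A couple of remarks on your write-up. First, you are right that the lemma as stated in this paper is literally false without the hypothesis that $\psi$ genuinely depends on all $k$ coordinates (take $\psi$ constant, or a dictator on $x_i$); the original statement in \cite{Barak15} includes that hypothesis explicitly, and you correctly isolate this as the one substantive point. Second, both of your routes---the second-moment identity with the collision-probability bound, and the direct counting in $(N,M)$---are clean; the counting version makes transparent that the bound $2^{-k-2}$ is tight at $k=2$ (achieved, e.g., by $\psi(x_1,x_2)=x_1\wedge x_2$, where $\partial_1\psi$ is nonzero on a single point of $\{-1,1\}$), while for $k\ge 3$ one actually gets the slightly stronger $\mathrm{Var}[\partial_i\psi]\ge (2^{k-1}-1)/2^{2k}$.
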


\begin{proof}
The variables are divided into two partitions, with $ F $ the "fixed" part, and $ G $ the "greedy" part. The partition is fixed for now but will be chosen at random later. The assignments of variables in $ F $ are chosen at random, whereas those in $ G $ will be chosen according to those in $ F $.
	
The constraints are categorized into active ones and inactive ones. A constraint $ (\psi_l,\mathcal{S}_l) $ is \textit{active} if $ \mathcal{S}_l $ has exactly one variable in $ G $. Since each constraint has only one variable in $ G $, we can further partition the active constraints according to $ G $. If $ x_j\in G $, define
\begin{equation}
N_j=\{l: \mathcal{S}_l~\text{is active and}~x_j\in \mathcal{S}_l\}.
\end{equation}
and
\begin{equation}
A_j=\bigcup_{l\in N_j}\{\mathcal{S}_l\setminus\{x_j\}\}.
\end{equation}
	
We'll choose $ x_j $ in such a way that it only depends on $ \psi_l $ with $ l\in N_j $ and $ A_j $.
	
By construction, it is immediate that the inactive constraints contribute nothing on average, i.e.
\begin{equation}
\mathbb{E}_{\psi_l}[\mathbb{E}_x[\psi_l(x)]]=0~~\text{if}~\psi_l~\text{is inactive.}
\end{equation}
	
For each $ l\in N_j $, write $ \psi_l(x)=x_jQ_l(x)+R_l(x)+\hat{\psi}_l(\emptyset) $, where $ Q_l(x)=\partial_j\psi_l(x) $ and $ R_l(x) $ are now functions of $ \mathcal{S}_l\setminus\{x_j\} $.
	
Since $ R_l(x) $ only depends on $ \mathcal{S}_l\setminus\{x_j\}\subset F $, and the variables in $ F $ are chosen at random, we have
\begin{equation}
\mathbb{E}_x[R_l(x)]=0
\end{equation}
and thus
\begin{equation}
\mathbb{E}_{\psi_l}[\mathbb{E}_x[R_l(x)]]=0.
\end{equation}
	
Now we describe how to choose $ x_j $. Define $ \tilde{Q}_j=\sum_{l\in N_j}Q_l $ and $ \theta_j=\mathbb{E}_x[\tilde{Q}_j] $. So $ \theta_j $ is the mean of $ \tilde{Q}_j $ over the random inputs $ x $. With these, we can define\footnote{See \cite{Barak15} for some technical remarks on this.}
\begin{equation}
x_j=\text{sgn}\left(\tilde{Q}_j-\theta_j\right).
\end{equation}
So the net contribution from all the constraints in $ N_j $, when averaged over $ \psi_l $'s, is
\begin{equation}
\mathbb{E}_\psi\biggl[\mathbb{E}_x\biggl[\sum_{l\in N_j}\psi_l(x)-\hat{\psi}_l(\emptyset)\biggr]\biggr]=\mathbb{E}_\psi[\mathbb{E}_x[x_j\tilde{Q}_j]],
\end{equation}
where
\begin{equation}
\mathbb{E}_x[x_j\tilde{Q}_j]=\mathbb{E}_x[\text{sgn}(\tilde{Q}_j-\theta_j)(\tilde{Q}_j-\theta_j+\theta_j)]=\mathbb{E}_x[|\tilde{Q}_j-\theta_j|]+\mathbb{E}_x[x_j\theta_j].
\end{equation}
By construction, $ \mathbb{E}_x[x_j]=0 $, so
\begin{equation}\label{Qcontri}
\mathbb{E}_x[x_j\tilde{Q}_j]=\mathbb{E}_x[|\tilde{Q}_j-\theta_j|].
\end{equation}
Using Theorem \ref{inequality}, (\ref{Qcontri}) can be lower bounded as
\begin{equation}
\mathbb{E}_x[|\tilde{Q}_j-\theta_j|]\geq \exp(-O(k))\cdot \mathrm{stddev}_x[\tilde{Q}_j-\theta_j].
\end{equation}
As \begin{equation}\label{varineq}
\mathrm{Var}_x[\tilde{Q}_j-\theta_j]\geq \sum_{l\in N_j} \hat{\psi}_l^2\geq \exp(-O(k))\cdot |N_j|,
\end{equation}
one immediately sees that
\begin{equation}
\mathbb{E}_\psi\mathbb{E}_x[x_j\tilde{Q}_j]\geq \exp(-O(k))\cdot \sqrt{|N_j|}.
\end{equation}
	
We would like the algorithm to choose the initial partition $ (F,G) $ uniformly at random. Hence $ \mathbf{P}[i\in G]=1/2 $. Conditioning on $ i $ in $ G $, the probability of a constraint involving $ i $ being active is at least $ 2^{-k+1} $. Hence, conditioned on $ i\in G $,
\begin{equation}
\mathbb{E}[|\mathbf{N}_i|]\geq \exp(-O(k))\cdot \mathrm{deg}(i),
\end{equation}
where $ |\mathbf{N}_j|=A_1+\cdots +A_{\mathrm{deg}(i)} $ is now the sum of the indicator functions for the constraints.
	
Since each indicator function is a product of the indicator functions for the variables, $ |N_j| $ can be regarded as a polynomial of degree at most $ k $. Using Theorem \ref{inequality}, one obtains
\begin{equation}
\mathbf{P}\bigl[|\mathbf{N}_j|\geq \mathbb{E}[|\mathbf{N}_j|]\bigr]\geq \exp(-O(k)).
\end{equation}
	
Hence
\begin{equation}
\mathbb{E}\left[\sqrt{|\mathbf{N}_j|}\right]\geq \exp(-O(k))\cdot \sqrt{\mathrm{deg}(i)}
\end{equation}
	
and we've just shown that 
\begin{equation}
\mathbb{E}_\psi\mathbb{E}_x\left[\sum_l \psi_l-\hat{\psi}_l(\emptyset)\right]\geq \exp(-O(k))\sum_{i=1}^n\sqrt{\mathrm{deg}(i)}\geq \exp(-O(k))\frac{m}{\sqrt{D}}
\end{equation}
\end{proof}

\section{Conclusion}\label{sec:conclusion}
We showed, using the quantum approximate optimization algorithm, that for a CSP of bounded degree $D$, if each constraint associated with a scope can be chosen independently from some probability distribution such that typicality is satisfied, then with high probability, one can produce an assignment satisfying $ \mu+\Omega(1/\sqrt{D}) $ fraction of the constraints. In other words, for typical instances, one can find an assignment satisfying $ \mu+\Omega(1/\sqrt{D}) $ fraction of the constraints. This is completely general, with Max-$ k $XOR and Max-$ k $SAT being the obvious examples. It also does not assume any structure for the underlying constraint hypergraph. It however, \emph{does not mean} we can find an assignment satisfying $ \mu+\Omega(1/\sqrt{D}) $ fraction of the constraints, for \emph{every} instance. That task is possible for Max-$ k $XOR ($ k $ odd), but false for Max-2XOR, for example; the question of when such an assignment exists remains an interesting open question. More uniform patterns are expected to arise in CSPs with bounded degree. It would also be interesting to see if there are arguments distinguishing the CSPs in which $ \mu+\Omega(1/\sqrt{D}) $ fraction of the constraints can be satisfied, from those in which only $ \mu+\Omega(1/D) $ fraction of the constraints can be satisfied.

One interesting aspect would be to see if the classical algorithm can achieve a similar result for CSPs with typicality, and also for CSP instances with "no overlapping constraints".

From the QAOA side, it will be interesting to see if one can use the full Hamiltonian, as there's some numerical evidence that this could improve the constant factor in Theorem \ref{typicality} and \ref{triangle}. Also, it's unclear if there are other merits in considering the truncated Hamiltonian, and the justification for using it.

\section*{Acknowledgements}
We would like to thank Edward Farhi for many valuable discussions and suggestions. YZ would also like to thank David Gosset, Aram Harrow, Ashley Montanaro, Ryan O'Donnell and John Wright for some insightful comments. This work is supported by the ARO under Grant No. W911NF-12-0486, and by the Department of Defense.

\begin{appendices}
\section{Sketched proof of the general case for the quantum algorithm}
For the most general case, we have
\begin{equation}
\psi_l(Z)=\sum_{K\subset K_l}\hat{\psi}(K)Z^S
\end{equation}
where WLOG, we can assume all $ K $ here are proper subsets of $ K_l $.

Find a term with the highest degree. There may be more than one such term, in which case we pick an arbitrary one. Suppose this term is $ \hat{\psi}_l(K_{r_l})Z^{K_{r_l}}=\hat{\psi}_l(K_{r_l})Z_{i_1}\cdots Z_{i_{r_l}} $, where $ r_l<k_l $ and $ K_{r_l}=\{i_1,\dots,i_{r_l}\} $.

Define the truncated Hamiltonian to be
\begin{equation}
C=\sum_l\hat{\psi}_l(K_{r_l})Z^{K_{r_l}}.
\end{equation}

The contribution of one term is
\begin{equation}\label{leadingoneterm}
\mathbb{E}_\psi[\hat{\psi}_u(K_{r_u})\bra{s}e^{i\gamma C}e^{i\beta B}Z^{K_{r_u}}e^{-i\beta B}e^{-i\gamma C}\ket{s}]
\end{equation}

Note that now the highest degree terms from different constraints can be the same. Taking that into account, write
\begin{equation}
C=\tilde{C}+\bar{\psi}_u(K_{r_u})Z^{K_{r_u}}
\end{equation}
where
\begin{equation}
\bar{\psi}_u(K_{r_u})=\hat{\psi}_u(K_{r_u})+\sum_{\{l: K_{r_l}=K_{r_u},l\neq u\}}\hat{\psi}_l(K_{r_l})
\end{equation}
and $ \hat{\psi}_l(K_{r_l}) $'s are independent since they come from different constraints.

With a bit of effort, one can show that
\begin{align}
&\mathbb{E}_\psi [\hat{\psi}_u(K_{r_u})\cos(2\gamma \bar{\psi}_u(K_{r_u}))]=O\left(\frac{1}{D}\right)\\
&\mathbb{E}_\psi [\hat{\psi}_u(K_{r_u})\sin(2\gamma \bar{\psi}_u(K_{r_u}))]\geq\frac{2\mathrm{Var}[\hat{\psi}_u(K_{r_u})]\exp(-2\text{max}\mathrm{Var}[\psi_{1\dots r}])}{\sqrt{D}}+O\left(\frac{1}{D^{3/2}}\right).
\end{align}
By arguments (\ref{odd})-(\ref{kave1}), it can be shown that the truncated part contributes as $ \Theta(1/\sqrt{D}) $.

It can be shown similarly that the remaining terms contribute as $ O(1/\sqrt{D}) $.
\end{appendices}

\end{document}